\def\Rank{\mathop{\rm Rank}\limits}
\def\GF{\mathop{\sf GF}\limits}
\def\tran{\mathrm{T}}
\def\linsp{{\mathsf{span}}}
\def\X{{\mathbf{X}}}
\def\Z{{\mathbf{Z}}}
\def\vv{{\mathbf{v}}}
\def\vv{{\mathbf{v}}}
\def\w{{\mathbf{w}}}
\def\0{{\mathbf{0}}}
\def\1{{\mathbf{1}}}
\def\prop{{\mathsf{Prob}}}
\def\lmat{\left(\begin{matrix}}
\def\rmat{\end{matrix}\right)}
\def\eqref#1{(\ref{#1})}
\newenvironment{thmproof}[1]
{\noindent\hspace{2em}{\it #1 }}
{\hspace*{\fill}~\QED\par\endtrivlist\unskip}
\newtheorem{theorem}{Theorem}
\newtheorem{definition}{Definition}
\newtheorem{lemma}{Lemma}
\newtheorem{corollary}{Corollary}
\newtheorem{proposition}{Proposition}
\def\BibTeX{{\rm B\kern-.05em{\sc i\kern-.025em b}\kern-.08em
    T\kern-.1667em\lower.7ex\hbox{E}\kern-.125emX}}
\def\algtop#1{\vspace{.4cm}\hrule\vspace{.035cm}\hrule\vspace{.1cm}\noindent{\S~\sc #1}\par}
\def\algbot{\hrule\vspace{.035cm}\hrule\vspace{.4cm}}
\begin{document}

\title{Capacity of 1-to-$K$ Broadcast Packet Erasure Channels with Channel Output Feedback}

\author{\IEEEauthorblockN{Chih-Chun Wang}
\IEEEauthorblockA{
Center of Wireless Systems and Applications (CWSA)\\
School of Electrical and Computer Engineering, Purdue University, USA}
}

%

\maketitle

\begin{abstract}
This paper focuses on the 1-to-$K$ broadcast packet erasure channel (PEC), which is a generalization of the 
broadcast binary erasure channel from the binary symbol to that of arbitrary finite fields $\GF(q)$ with sufficiently large $q$. We consider the setting in which the source node has instant feedback of the channel outputs of the $K$ receivers after each transmission. Such a setting directly models {network coded packet transmission in the downlink direction} with integrated feedback mechanisms (such as Automatic Repeat reQuest (ARQ)).



 The main results of this paper are:  (i) The capacity region for general 1-to-3 broadcast PECs, and (ii) The capacity region for two classes of 1-to-$K$ broadcast PECs:
the {symmetric} PECs, and the {spatially independent} PECs with {one-sided fairness constraints}. This paper also develops
(iii) A pair of outer and inner bounds of the capacity region for arbitrary 1-to-$K$ broadcast PECs, which can be evaluated by any linear programming solver. For most practical scenarios, the outer and inner bounds meet and thus jointly characterize the capacity.


%
%
\end{abstract}

\begin{keywords} Packet erasure channels, broadcast capacity, channel output feedback, network code alignment.
\end{keywords}

\section{Introduction}

In the last decade, the new network coding concept has emerged \cite{LiYeungCai03}, which focuses on achieving the capacity of a communication network. More explicitly, the network-coding-based approaches generally model each hop of a packet-based communication network by a {\em packet erasure channel} (PEC) instead of the classic Gaussian channel \cite{DanaGowaikarPalankiHassibiEffros06}. Such simple abstraction allows us to explore the information-theoretic capacity of a much larger network with mathematical rigor and also sheds new insights on the network effects of a communication system. One such example is the broadcast channel capacity with message side information. Unlike the existing Gaussian Broadcast Channel (GBC) results that are limited to the simplest 2-user scenario \cite{Wu07}, the capacity region for 1-to-$K$ broadcast PECs with message side information has been derived for $K=3$ and tightly bounded for general $K$ values \cite{Wang10a,WangKhreishahShroff09}.\footnote{The results of 1-to-$K$ broadcast PECs with message side information \cite{Wang10a,WangKhreishahShroff09} is related to the capacity of the wireless ``XOR-in-the-air" scheme \cite{KattiRahulHuKatabiMedardCrowcroft06}.} 
In addition to providing new insights on network communications, this simple PEC-based abstraction in network coding also accelerates the transition from theory to practice.
Many of the capacity-achieving {\em network codes} \cite{HoMedardKoetterKargerEffrosShiLeong06} have since been implemented for either the wireline \cite{ChouWuJain03} or the wireless multi-hop networks \cite{KattiRahulHuKatabiMedardCrowcroft06,KoutsonikolasWangHu10}.

Motivated by recent wireless network coding protocols, this paper studies the memoryless 1-to-$K$ broadcast PEC with Channel Output Feedback (COF). Namely, a single source node sends out a stream of packets wirelessly, which carries information of $K$ independent downlink data sessions, one for each receiver $d_k$, $k=1,\cdots, K$, respectively. After packet transmission through the broadcast PEC, each $d_k$ then informs the source its own channel output by sending back the ACKnowledgement (ACK) packets after each time slot.
\cite{GeorgiadisTassiulas09} derives the capacity region of the memoryless 1-to-2 broadcast PEC with COF. The results show that COF strictly improves the capacity of the memoryless 1-to-2 broadcast PEC, a mirroring result to the achievability results of GBCs with COF \cite{OzarowCheong84}.  Other than increasing the achievable throughput, COF can also be used for queue and delay management \cite{LiWangLin10,SundararajanShahMedard07} and for rate-control in a wireless network coded system \cite{KoutsonikolasWangHu10}.

The main contribution of this work includes: (i) The capacity region for general 1-to-3 broadcast PECs with COF; 
(ii) The capacity region for two classes of 1-to-$K$ broadcast PECs with COF:
the {\em symmetric} PECs, and the {\em spatially independent} PECs with {\em one-sided fairness constraints}; and
(iii) A pair of outer and inner bounds of the capacity region for general 1-to-$K$ broadcast PECs with COF, which can be evaluated by any linear programming solver. Extensive numerical experiments show that the outer and inner bounds meet for almost all practical scenarios and thus effectively bracket the capacity.

The capacity outer bound in this paper is derived by generalizing the degraded channel argument first proposed in \cite{OzarowCheong84}. For the achievability part of (i), (ii), and (iii), we devise a new class of inter-session network coded schemes, termed the {\em packet evolution method}. The packet evolution method is based on a novel concept of {\em network code alignment}, which is the PEC-counterpart of the interference alignment method originally proposed for Gaussian interference channels \cite{CadambeJafar08,DasVishwanathJafarMarkopoulou10}.


This paper is organized as follows. Section~\ref{sec:setting} contains the basic setting and the detailed comparison to the existing results. 
Section~\ref{sec:main} describes the main theorems of this paper. 
Section~\ref{sec:achievability} provides detailed description of the {\em packet evolution} scheme and the corresponding intuitions. Section~\ref{sec:achievability} also includes brief sketches on how to use the packet evolution method to prove the achivability results. (Most proofs of this paper are omitted due to the limit of space.)  Numerical evaluation is included in Section~\ref{subsec:sim}. Section~\ref{sec:conclusion} concludes this paper.


\section{Problem Setting \& Existing Results\label{sec:setting}}

\subsection{The Memoryless 1-to-$K$ Broadcast Packet Erasure Channel} 
For any positive integer $K$, we use $[K]\stackrel{\Delta}{=}\{1,2,\cdots,K\}$ to denote the set of integers from 1 to $K$, and use $2^{[K]}$ to denote the collection of all subsets of $[K]$.

Consider a 1-to-$K$ broadcast PEC from source $s$ to $K$ destinations $d_k$, $k\in [K]$. For each channel usage, the 1-to-$K$ broadcast PEC takes an input symbol $Y\in\GF(q)$ from $s$ and outputs a $K$-dimensional vector $\Z\stackrel{\Delta}{=}(Z_1,\cdots, Z_K)\in (\{Y\}\cup \{*\})^K$,  where the $k$-th coordinate $Z_k$ being ``$*$" denotes that the transmitted symbol $Y$ does not reach the $k$-th receiver $d_k$ (thus being erased). There is no other type of noise, 
i.e.,
the individual output is either equal to the input $Y$ or an erasure ``$*$." The {\em success probabilities} of a 1-to-$K$ PEC are described by $2^K$ non-negative parameters: $p_{S\overline{[K]\backslash S}}$ for all $S\in 2^{[K]}$ such that $\sum_{S\in 2^{[K]}}p_{S\overline{[K]\backslash S}}=1$ and for all $y\in \GF(q)$, 
\begin{align}
\prop\left(\left.\{k\in[K]:Z_k=y\}=S\right|Y=y\right)=p_{S\overline{[K]\backslash S}}.\nonumber
\end{align}
That is, $p_{S\overline{[K]\backslash S}}$ denotes the probability that the transmitted symbol $Y$ is received {\em by and only by} the receivers $\{d_k:k\in S\}$. 
For all $S\in 2^{[K]}$, we also define
\begin{align}
p_{\cup S}=\sum_{\forall S'\in 2^{[K]}: S'\cap S\neq \emptyset} p_{S'\overline{[K]\backslash S'}}.\nonumber
\end{align}
That is, $p_{\cup S}$ is the probability that {\em at least one of the receiver $d_k$ in $S$} successfully receives  $Y$.  We sometimes use $p_{k}$ as shorthand for $p_{\cup \{k\}}$, which is the marginal probability that the $k$-th receiver $d_k$ receives $Y$ successfully.

We assume that the broadcast PEC is {\em memoryless} and {\em time-invariant}, and use $Y(t)$ and $\Z(t)$ to denote the input and output for the $t$-th time slot.
Note that this setting allows the success events among different receivers to be dependent, also defined as {\em spatial dependence}. For example, when two logical receivers $d_{k_1}$ and $d_{k_2}$ are situated in the same physical node, we simply set the $p_{S\overline{[K]\backslash S}}$ parameters to allow perfect correlation between the success events of $d_{k_1}$ and $d_{k_2}$. Throughout this paper, we consider memoryless 1-to-$K$ broadcast PECs that may or may not be spatially dependent.


\subsection{Broadcast PEC Capacity with Channel Output Feedback}

We consider the following broadcast scenario from $s$ to $\{d_k:\forall k\in[K]\}$. Assume slotted transmission. Source $s$ is allowed to use the 1-to-$K$ PEC exactly $n$ times and would like to carry information for $K$ independent downlink data sessions, one for each $d_k$, respectively. For each $k\in[K]$, the $k$-th session (from $s$ to $d_k$) contains $nR_k$ information symbols $\X_k\stackrel{\Delta}{=}\{X_{k,j}\in\GF(q): \forall j\in [nR_k]\}$, where $R_k$ is the data rate for the $(s,d_k)$ session. All the information symbols $X_{k,j}$ for all $k\in[K]$ and $j\in[nR_k]$ are independently and uniformly distributed in $\GF(q)$.

We consider the setting with instant channel output feedback (COF). That is, for the $t$-th time slot, $s$ sends out a symbol
\begin{align}
Y(t)=
f_t\left(\{\X_k:\forall k\in [K]\}, \{\Z(\tau):\tau\in[t-1]\}\right),\nonumber
\end{align}
which is a function $f_t(\cdot)$ based on the information symbols $\{X_{k,j}\}$ and the COF $\{\Z(\tau):\tau\in[t-1]\}$ of the previous transmissions. In the end of the $n$-th time slot, each $d_k$ outputs the decoded symbols
\begin{align}
\hat{\X}_k\stackrel{\Delta}{=}\{\hat{X}_{k,j}:\forall j\in[nR_k]\}=g_k(\{Z_{k}(t):\forall t\in[n]\}),\nonumber
\end{align}
where $g_k(\cdot)$ is the decoding function of $d_k$ based on the corresponding observation $Z_k(t)$ for all $t\in[n]$. Note that we assume that the PEC channel parameters $\left\{p_{S\overline{[K]\backslash S}}:\forall S\in 2^{[K]}\right\}$ are available at $s$ before transmission. 

%

We now define the achievability of a 1-to-$K$  PEC with COF.

\begin{definition}A rate vector $(R_1,\cdots, R_K)$ is achievable if for any $\epsilon>0$, there exist sufficiently large $n$ and sufficiently large underlying finite field $\GF(q)$ such that
\begin{align}
\forall k\in[K],~\prop\left( \hat{\X}_k\neq \X_k\right)<\epsilon.\nonumber
\end{align}
\end{definition}
\begin{definition}The capacity region of a 1-to-$K$ PEC with COF is the closure of all achievable rate vectors.
\end{definition}

\subsection{Existing Results\label{subsec:existing}}
\begin{theorem}[Theorem~3 in \cite{GeorgiadisTassiulas09}] The capacity region $(R_1,R_2)$ of a 1-to-2 PEC with COF is described by
\begin{align}
\begin{cases}\frac{R_1}{p_{1}}+\frac{R_2}{p_{\cup\{1,2\}}}\leq 1\\
\frac{R_1}{p_{\cup\{1,2\}}}+\frac{R_2}{p_{2}}\leq 1
\end{cases}.\label{eq:2cap}\end{align}
\end{theorem}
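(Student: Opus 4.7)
My plan is to prove the two halves of the theorem -- converse (outer bound) and achievability (inner bound) -- separately. The converse uses an Ozarow--Cheong-style degraded-channel enhancement, and the achievability uses an explicit three-phase, COF-driven, network-coded scheme.

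\emph{Converse.} To obtain the first inequality $R_1/p_1+R_2/p_{\cup\{1,2\}}\le 1$, I would genie-enhance the channel by giving receiver $d_2$ free access to the per-slot output $Z_1(t)$ of receiver $d_1$. Under this enhancement, $d_2$'s effective success probability becomes $p_{\cup\{1,2\}}$ and its observation is a superset of $d_1$'s, turning the channel into a \emph{physically degraded} broadcast channel with $d_2$ superior and $d_1$ inferior. A classical result (used by Ozarow--Cheong, and due to El Gamal for general degraded BCs) says that channel output feedback does not enlarge the capacity region of a physically degraded broadcast channel. Hence it suffices to compute the no-feedback capacity of the degraded PEC, which by a standard Fano/time-sharing argument equals $\{(R_1,R_2):R_1/p_1+R_2/p_{\cup\{1,2\}}\le 1\}$. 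Swapping the roles of $d_1$ and $d_2$ in the enhancement yields the companion inequality $R_2/p_2+R_1/p_{\cup\{1,2\}}\le 1$.

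\emph{Achievability.} I would analyse a three-phase scheme that uses COF to stop each transmission as soon as it has served its purpose. \emph{Phase~1:} for each of the $nR_1$ source symbols of $\X_1$, retransmit until at least one of $\{d_1,d_2\}$ receives it; this takes $\approx nR_1/p_{\cup\{1,2\}}$ slots and, by Bayes, leaves a fraction $1-p_1/p_{\cup\{1,2\}}$ of $\X_1$ held \emph{only} at $d_2$. Call this set $A$. \emph{Phase~2:} the symmetric step for $\X_2$ produces a set $B$ of size $nR_2(1-p_2/p_{\cup\{1,2\}})$ held only at $d_1$. \emph{Phase~3:} the source broadcasts pairwise XORs $a_i\oplus b_j$ of the side-information symbols; because each receiver already holds one summand, every successfully received XOR is immediately useful. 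Any residual unpaired symbols (if $|A|\neq|B|$) are sent uncoded at the end. Summing the expected phase durations shows that the total time is $n(1+o(1))$ whenever $(R_1,R_2)$ lies strictly inside the claimed region.

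The main obstacle is the fine bookkeeping of Phase~3: each XOR carries one ``need'' for $d_1$ and one for $d_2$, and the source must use COF to decide how long to keep broadcasting each combination and when to move on, so that both residual queues drain in sync. The queueing-style argument of \cite{GeorgiadisTassiulas09} is needed to certify that Phase~3 fits within the slack $n-nR_1/p_{\cup\{1,2\}}-nR_2/p_{\cup\{1,2\}}$ precisely when both capacity inequalities hold. Decoding correctness of the linear combinations is standard once $\GF(q)$ is chosen large enough, as permitted by the definition of achievability.
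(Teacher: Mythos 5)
Your proposal is correct and takes essentially the same route the paper relies on: your converse is exactly the Ozarow--Cheong-style genie enhancement to a physically degraded PEC plus El Gamal's no-gain-from-feedback result, which is the very argument the paper sketches for Proposition~1 (specialized to $K=2$), and your Phases~1--3 are the paper's two-phase Georgiadis--Tassiulas scheme with Phase~1 split per session and the final phase realized by pairwise XORs with COF-driven queue bookkeeping rather than random linear mixing of the two overheard queues. That last difference is immaterial for throughput, since both variants drain the queues at rates $p_1$ and $p_2$ and yield total time $\max\bigl(nR_1/p_1+nR_2/p_{\cup\{1,2\}},\,nR_1/p_{\cup\{1,2\}}+nR_2/p_2\bigr)\leq n$ exactly when the two claimed inequalities hold.
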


One scheme that achieves the above capacity region in \eqref{eq:2cap} is the 2-phase approach in \cite{GeorgiadisTassiulas09}. That is, for any $(R_1,R_2)$ in the interior of \eqref{eq:2cap}, perform the following coding operations.

In Phase~1, $s$ sends out uncoded information packets $X_{1,j_1}$ and $X_{2,j_2}$ for all $j_1\in[nR_1]$ and $j_2\in[nR_2]$ until each packet is received by at least one receiver. Those $X_{1,j_1}$ packets that are received by $d_1$ have already reached their intended receiver and thus will not be retransmitted in the second phase. Those $X_{1,j_1}$ packets that are received by $d_2$ but not by $d_1$ need to be retransmitted in the second phase, and are thus stored in a separate queue $Q_{1;2\overline{1}}$. Symmetrically, the $X_{2,j_2}$ packets that are received by $d_1$ but not by $d_2$ need to be retransmitted, and are stored in another queue $Q_{2;1\overline{2}}$.  Since those ``overheard" packets in queues $Q_{1;2\overline{1}}$ and $Q_{2;1\overline{2}}$ are perfect candidates for intersession network coding \cite{KattiRahulHuKatabiMedardCrowcroft06}, they can be linearly mixed together in Phase~2. Each single coded packet in Phase~2 can now serve both $d_1$ and $d_2$ simultaneously. The intersession network coding gain in Phase~2 allows us to achieve the capacity region in \eqref{eq:2cap}.
Based on the same logic, \cite{LarssonJohansson06} derives an achievability region for 1-to-$K$ broadcast PECs with COF under a {\em perfectly symmetric setting}. 
\cite{RoznerIyerMehtaQiuJafry07} implements such 2-phase approach while taking into account of various practical considerations, such as time-out and network synchronization.

\subsection{The Suboptimality of The 2-Phase Approach\label{subsec:example}}

The above 2-phase approach does not achieve the capacity for the cases in which $K>2$. To illustrate this point, consider the example in Fig.~\ref{fig:example}.

\begin{figure}
\centering
\includegraphics[height=2.75cm]{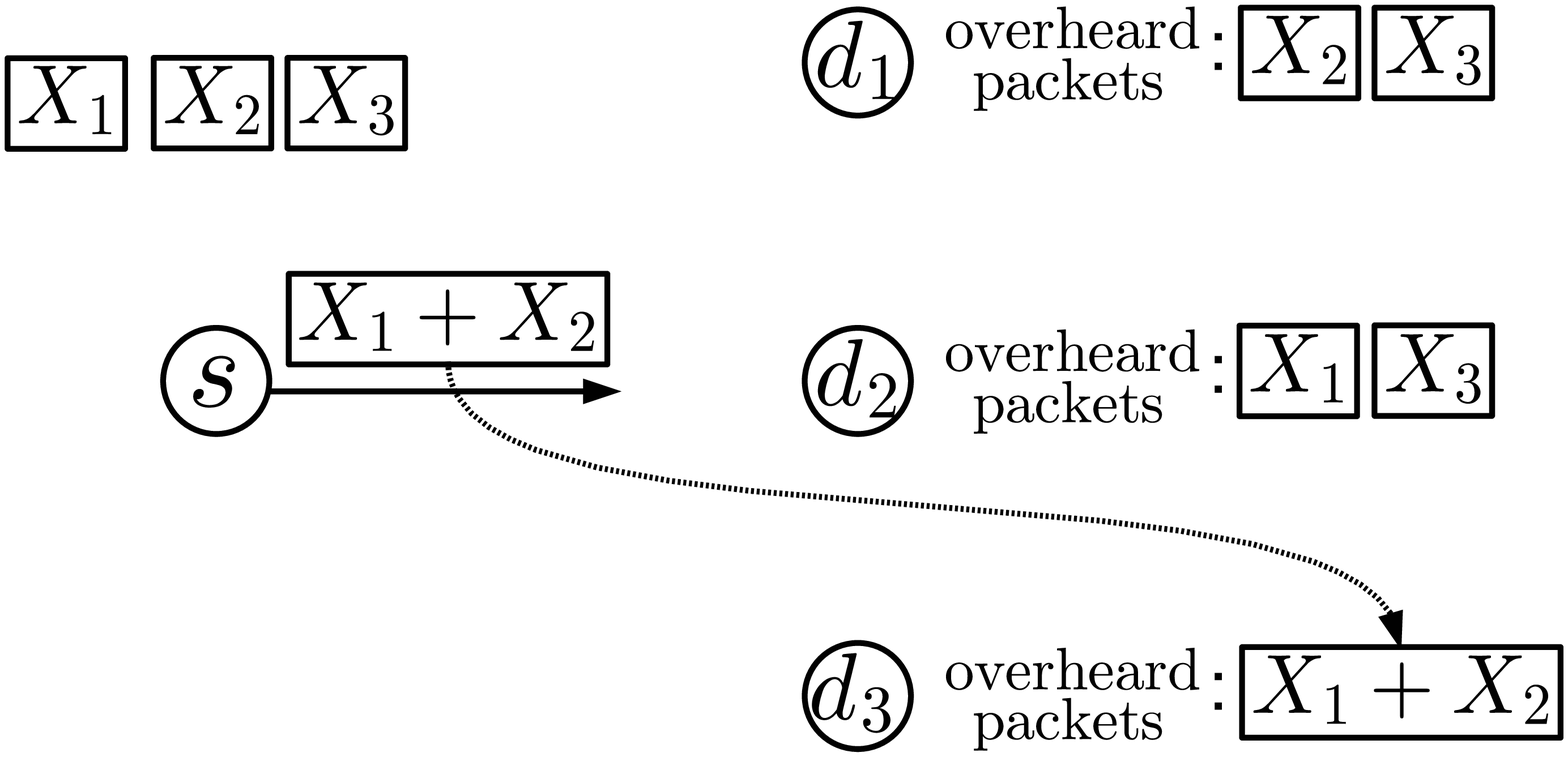}

%

\caption{Example of the suboptimality of the 2-phase approach. \label{subfig:example1}\label{fig:example}}
\end{figure}

In Fig.~\ref{subfig:example1}, source $s$ would like to serve three receivers $d_1$ to $d_3$. Each $(s,d_k)$ session contains a single information packet $X_k$, and the goal is to convey each $X_k$ to the intended $d_k$ for all $k=1,2,3$. Suppose the 2-phase approach in Section~\ref{subsec:existing} is used. During Phase~1, each packet is sent repeatedly until it is received by at least one receiver, which either conveys the packet to the intended receiver or creates an overheard packet that can be used in Phase~2. Suppose after Phase~1, $d_1$ has received $X_2$ and $X_3$, $d_2$ has received $X_1$ and $X_3$, and $d_3$ has not received any packet (Fig.~\ref{subfig:example1}). Since each packet has reached at least one receiver, source $s$ moves to Phase~2.

Suppose $s$ sends out a coded packet $[X_1+X_2]$ in Phase~2. Such coded packet can serve both $d_1$ and $d_2$. That is, $d_1$ (resp.\ $d_2$) can decode $X_1$ (resp.\ $X_2$)  by subtracting $X_2$ (resp.\ $X_1$) from $[X_1+X_2]$. Nonetheless, since the broadcast PEC is random, the packet $[X_1+X_2]$ may or may not reach $d_1$ or $d_2$. Suppose that due to random channel realization, $[X_1+X_2]$ reaches only $d_3$, see Fig.~\ref{subfig:example1}. The remaining question is what $s$ should send for the next time slot.

{\bf The existing 2-phase approach:} We first note that since $d_3$ received neither $X_1$ nor $X_2$ in the past, the newly received $[X_1+X_2]$ cannot be used by $d_3$ to decode any information packet. In the existing results \cite{LarssonJohansson06,GeorgiadisTassiulas09}, $d_3$ thus discards the overheard $[X_1+X_2]$, and $s$ would continue sending $[X_1+X_2]$ for the next time slot in order to capitalize this coding opportunity created in Phase~1.

{\bf The optimal decision:} It turns out that the broadcast system can actually benefit from the fact that $d_3$ overhears the coded packet $[X_1+X_2]$ even though neither $X_1$ nor $X_2$ can be decoded by $d_3$. More explicitly, instead of sending $[X_1+X_2]$, $s$ should send a new packet $[X_1+X_2+X_3]$ that mixes all three sessions together. With the new $[X_1+X_2+X_3]$ (plus the previous overhearing patterns in Fig.~\ref{subfig:example1}), $d_1$ can decode $X_1$ by subtracting both $X_2$ and $X_3$ from $[X_1+X_2+X_3]$. $d_2$ can decode $X_2$ by subtracting both $X_1$ and $X_3$ from $[X_1+X_2+X_3]$. For $d_3$, even though $d_3$ does not know the values of  $X_1$ and $X_2$, $d_3$ can still use the previously overheard $[X_1+X_2]$ packet to subtract the interference $(X_1+X_2)$ from $[X_1+X_2+X_3]$ and decode its desired packet $X_3$. As a result, the new coded packet $[X_1+X_2+X_3]$ serves  $d_1$, $d_2$, and $d_3$, simultaneously. This new coding decision thus strictly outperforms the existing 2-phase approach.

Two critical observations can be made for this example. First of all, when $d_3$ overhears a coded $[X_1+X_2]$ packet, even though $d_3$ can decode neither $X_1$ nor $X_2$, such new side information can still be used for future decoding. More explicitly, as long as $s$ sends packets that are of the form $\alpha(X_1+X_2)+\beta X_3$, the ``aligned interference" $\alpha(X_1+X_2)$ can be completely removed by $d_3$ without decoding individual $X_1$ and $X_2$. This technique is thus termed  ``{\em code alignment}," which is in parallel with the original interference alignment method \cite{CadambeJafar08}. Second of all, in the existing 2-phase approach, Phase~1 has the dual roles of sending uncoded packets to their intended receivers, and, at the same time,  creating new coding opportunities (the overheard packets) for Phase~2. It turns out that this dual-purpose Phase-1 operation is indeed optimal.  The suboptimality of the 2-phase approach for $K>2$ is actually caused by the Phase-2 operation, in which source $s$ only capitalizes the coding opportunities created in Phase~1 but does not create any new coding opportunities for subsequent packet mixing. One can thus envision that for the cases $K>2$, an optimal policy should be a multi-phase policy, say an $M$-phase policy, such that for all $i\in[M-1]$ (not only for the first phase) the coded packets sent in the $i$-th phase have dual roles of carrying information to their intended receivers and simultaneously creating new coding opportunities for the subsequent Phases $(i+1)$ to $M$. These two observations will be the building blocks of our achievability results. 


\section{The Main Results\label{sec:main}}


Section~\ref{subsec:general} focuses on the capacity results for arbitrary broadcast PEC parameters while Section~\ref{subsec:special} considers two special classes of broadcast PECs: the symmetric and the spatially independent PECs, respectively.

\subsection{Capacity Results For General 1-to-$K$ Broadcast PECs\label{subsec:general}}

We define any bijective function $\pi:[K]\mapsto[K]$ as a permutation. 
There are totally $K!$ distinct permutations.
Given any permutation $\pi$, for all $j\in [K]$ we define $S^\pi_j\stackrel{\Delta}{=}\{\pi(l):\forall l\in[j]\}$ as the set of the first $j$ elements according to the permutation $\pi$. We then have the following capacity outer bound for any 1-to-$K$ broadcast PEC with COF.

\begin{proposition}\label{prop:outer}
Any achievable rates  $(R_1,\cdots, R_K)$ must satisfy the following $K!$ inequalities:
\begin{align}
\forall \pi,~\sum_{j=1}^K\frac{R_{\pi(j)}}{p_{\cup S^\pi_j}}\leq 1.\label{eq:pi-outer}
\end{align}
\end{proposition}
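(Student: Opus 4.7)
\emph{Proof sketch for Proposition~\ref{prop:outer}.} My plan is to generalize the Ozarow--Cheong-style converse used by \cite{GeorgiadisTassiulas09} for $K=2$, now indexed by all $K!$ receiver orderings. Fix any permutation $\pi$ and consider the genie-enhanced broadcast channel in which, for each time slot $t$, receiver $d_{\pi(j)}$ observes $Z_{\pi(1)}(t),\ldots,Z_{\pi(j-1)}(t)$ in addition to its own output. Writing $V_j(t)\triangleq(Z_{\pi(1)}(t),\ldots,Z_{\pi(j)}(t))$ for the enhanced output at $d_{\pi(j)}$, any rate vector achievable over the original PEC with COF is achievable over this enhanced channel, since each decoder can simply discard the genie while the channel dynamics and COF are otherwise unchanged.

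By construction $V_{j-1}(t)$ is a deterministic sub-vector of $V_j(t)$, so the enhanced channel is physically degraded in the order $d_{\pi(1)}\prec d_{\pi(2)}\prec\cdots\prec d_{\pi(K)}$, with the COF equal to $V_K(t)$, the output of the strongest enhanced receiver. By the classical result (El Gamal, 1978) that channel output feedback does not enlarge the capacity region of a physically degraded memoryless broadcast channel, the COF may be stripped when outer-bounding the enhanced capacity region. Moreover, the marginal channel from $s$ to enhanced receiver $d_{\pi(j)}$ is itself a PEC with success probability $p_{\cup S_j^\pi}$, because $V_j(t)$ reveals $Y(t)$ iff at least one receiver in $\{d_{\pi(l)}:l\le j\}$ originally received $Y(t)$.

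It therefore suffices to bound the no-feedback capacity region of the physically degraded $K$-receiver PEC with non-decreasing success probabilities $q_j\triangleq p_{\cup S_j^\pi}$. Applying the standard superposition-coding converse with auxiliaries $U_K\to U_{K-1}\to\cdots\to U_1\to Y$ forming a Markov chain (with the conventions $U_{K+1}=\emptyset$ and $U_1=Y$), and specializing to erasure outputs, one obtains
\[
R_{\pi(j)}\log q \;\le\; q_j\cdot I(U_j;Y\mid U_{j+1},\ldots,U_K).
\]
Dividing by $q_j\log q$, summing over $j\in[K]$, and telescoping the conditional mutual informations via the chain rule gives $\sum_{j=1}^K R_{\pi(j)}/q_j \le H(Y)/\log q \le 1$. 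Repeating the argument for each of the $K!$ permutations produces all the inequalities in \eqref{eq:pi-outer}.

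The main obstacle is the rigorous invocation of the feedback-useless theorem in the second step. One must verify (i) that the enhancement produces genuine physical rather than merely stochastic degradation, which follows from the nestedness $V_1\subseteq V_2\subseteq\cdots\subseteq V_K$, and (ii) that the original COF $(Z_1(t),\ldots,Z_K(t))$ coincides, up to relabelling, with $V_K(t)$, so that the feedback presented to $s$ matches exactly the form required by the degraded-BC feedback-capacity theorem. Once this is in place, the remaining steps are a mechanical specialization of standard degraded-BC tools, in direct parallel with the $K=2$ derivation in \cite{GeorgiadisTassiulas09}.
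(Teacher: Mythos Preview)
Your proposal is correct and follows essentially the same route as the paper's proof: the paper also fixes a permutation $\pi$, enhances the channel by piping $d_{\pi(j)}$'s output to $d_{\pi(j+1)}$ (equivalent to your genie giving $d_{\pi(j)}$ the observations $Z_{\pi(1)},\ldots,Z_{\pi(j-1)}$), observes that the resulting PEC is physically degraded with per-receiver success probabilities $p_{\cup S_j^\pi}$, and then invokes the Ozarow--Cheong/Georgiadis--Tassiulas degraded-PEC-with-COF argument. Your write-up simply unpacks that last citation into its two constituents (El~Gamal's feedback-is-useless theorem for physically degraded BCs, followed by the standard superposition converse for the no-feedback degraded erasure BC), so the arguments coincide at different levels of detail.
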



\begin{thmproof}{Sketch of the proof:}
For any given $\pi$, construct a new broadcast channel from the original one by adding $(K-1)$ information pipes connecting all the receivers $d_1$ to $d_K$. More explicitly, for all $j\in[K-1]$, create an auxiliary pipe from $d_{\pi(j)}$ to $d_{\pi(j+1)}$. With the new auxiliary pipes, the success probability of $d_{\pi(j)}$ increases from $p_{\pi(j)}$ to $p_{\cup S_j^\pi}$ for all $j\in[K]$ since $d_{\pi(j)}$ now knows the transmitted symbol $Y$ as long as at least one of $d_{\pi(l)}$, $\forall l\in[j]$, receives $Y$ successfully. Note that the new broadcast PEC is physically degraded. By the same arguments as in \cite{OzarowCheong84,GeorgiadisTassiulas09}, \eqref{eq:pi-outer} describes the capacity of a physically degraded PEC with COF, which thus outer bounds the capacity of the original PEC with COF.


\end{thmproof}

For the following, we provide the capacity results for general 1-to-3 broadcast PECs.

\begin{proposition}\label{prop:cap3}
For any parameter values $\left\{p_{S\overline{\{1,2,3\}\backslash S}}:\forall S\in 2^{\{1,2,3\}}\right\}$ of a 1-to-3 PEC with COF, the capacity outer bound in Proposition~\ref{prop:outer} is the capacity region. 
\end{proposition}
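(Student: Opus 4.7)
The plan is to match the outer bound of Proposition~\ref{prop:outer} with an explicit achievability scheme based on the packet evolution method sketched in Section~\ref{subsec:example}. Since the outer bound is already in hand, it suffices to prove that every rate triple $(R_1,R_2,R_3)$ strictly inside the polytope cut out by the $K!=6$ permutation inequalities in \eqref{eq:pi-outer} is achievable for sufficiently large $n$ and $q$.

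First I would set up a three-phase coding scheme, indexed by the number of sessions mixed per transmitted symbol. In Phase~1, source $s$ sends each uncoded $X_{k,j}$ repeatedly until at least one of $d_1,d_2,d_3$ picks it up; the packet is then filed into a queue $Q_k^T$ indexed by its session $k$ and the subset $T\subseteq[3]$ of receivers that currently hold it. In Phase~2, $s$ forms two-session combinations $\alpha_1 X_{i,j_1} + \alpha_2 X_{\ell,j_2}$ with $X_{i,j_1}\in Q_i^{T_1}$, $X_{\ell,j_2}\in Q_\ell^{T_2}$, $\ell\in T_1$, and $i\in T_2$, so that $d_i$ and $d_\ell$ can immediately decode their desired symbol by subtracting what they already know. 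Whenever a third receiver $d_k$ with $k\notin\{i,\ell\}$ overhears such a transmission, the event is logged as side information of the form ``$d_k$ knows $\alpha_1 X_{i,j_1}+\alpha_2 X_{\ell,j_2}$,'' even though $d_k$ cannot decode the individual summands. In Phase~3, $s$ sends three-session combinations $\beta_1 X_{i,j_1}+\beta_2 X_{\ell,j_2}+\beta_3 X_{k,j_3}$ whose residual interference seen at each of $d_i,d_\ell,d_k$ coincides with a Phase-2 combination already stored at that receiver, so that the code-alignment trick of Section~\ref{subsec:example} lets all three receivers decode from the same symbol.

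Next I would cast the scheme as a linear program whose variables are the expected durations of each sub-configuration of each phase, with flow-conservation equalities tying the inflows and outflows of every queue $Q_k^T$ through the PEC parameters $\{p_{S\overline{[3]\backslash S}}\}$. For a sufficiently large $\GF(q)$, a generic choice of coding coefficients keeps all required decoding matrices full rank by a standard Schwartz--Zippel argument, and a concentration bound on queue-length fluctuations around their means lifts expected-value feasibility of the LP to a high-probability guarantee that $\prop(\hat{\X}_k\neq\X_k)<\epsilon$ for every $k\in[3]$.

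The main obstacle --- and the reason $K=3$ is special --- is to show that the LP inner region coincides \emph{exactly} with the outer-bound polytope rather than a strict subset. I would attack this via LP duality: each of the six facets in \eqref{eq:pi-outer} should admit a dual certificate tied to a single permutation $\pi$, and for each $\pi$ I would exhibit a concrete phase schedule that saturates the $\pi$-th inequality at equality. The delicate bookkeeping is to handle the general, possibly spatially dependent, case in which the joint probabilities $p_{S\overline{[3]\backslash S}}$ do not factor and all $2^3$ reception patterns must be tracked simultaneously; for $K=3$ the number of queue types $Q_k^T$ and the number of dual constraints remain small enough to be matched facet-by-facet, which is precisely why the capacity region closes at $K=3$ but is only bracketed by inner/outer bounds for general $K$.
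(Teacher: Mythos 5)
Your scheme itself is essentially the paper's route: the three-phase construction with queues $Q_k^T$ indexed by session and overhearing set, the alignment requirement on triple mixtures, and the LP over expected durations with flow conservation, finished off by a large-$q$ Schwartz--Zippel argument and a concentration bound, is precisely a phased instantiation of the packet evolution scheme. The paper develops the generic properties you would need (Lemmas~\ref{lem:non-interfering} and~\ref{lem:decodability}) and the LP formulation (the $x_S$ and $w_{k;S\rightarrow T}$ variables of Proposition~\ref{prop:ach2}, whose constraints \eqref{eq:coding-len}--\eqref{eq:ind-length-2} play the role of your flow-conservation equalities) for exactly this purpose, so up to that point you and the paper agree.

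The gap is in your final tightness step, and it is genuine. First, LP duality points the wrong way for what you need: to show the inner region contains the outer polytope you must exhibit, for every rate triple satisfying the six inequalities of \eqref{eq:pi-outer}, a \emph{primal-feasible} assignment of the $x$ and $w$ variables. A ``dual certificate tied to a single permutation $\pi$'' certifies only that rates violating the $\pi$-th inequality are unachievable, i.e., it re-proves the converse that Proposition~\ref{prop:outer} already supplies, and contributes nothing to achievability. (A Farkas-type argument --- every infeasibility certificate of the inner LP implies violation of some convex combination of the six permutation inequalities --- could in principle be made to work, but that is a different and unstated argument, and showing that the extreme rays of the dual cone are exactly the six permutations is the entire content of the proposition, not bookkeeping.) Second, ``a concrete phase schedule that saturates the $\pi$-th inequality at equality'' produces one boundary point per facet, six points in total; since the region is a full-dimensional polytope in $\RR^3$, this does not cover it. You need either all vertices of the outer polytope --- where two or three inequalities of \eqref{eq:pi-outer} are simultaneously tight, which is exactly where the delicate interaction between permutations lives --- together with time-sharing and the comprehensiveness of the capacity region, or, as the paper's proof does, an explicit construction of feasible $x_S$ and $w_{k;S\rightarrow T}$ values as functions of an \emph{arbitrary} point of the outer region and of the general, possibly spatially dependent parameters $p_{S\overline{[3]\backslash S}}$, verified pointwise against \eqref{eq:total-x}--\eqref{eq:ind-length-2}. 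Until one of these is carried out, the coincidence of the inner and outer regions at $K=3$ is asserted rather than proved.
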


To state the capacity inner bound for 1-to-$K$ PECs with $K\geq 4$, we need to define an additional function: $f_p(S\overline{T})$, which takes an input $S\overline{T}$ of two disjoint sets $S,T\in 2^{[K]}$. More explicitly, $f_p(S\overline{T})$ is the probability that the transmitted packet $Y$ is received by all those $d_i$ with $i\in S$ but not received by any $d_j$ with $j\in T$. That is,
\begin{align}
&f_p(S\overline{T})\stackrel{\Delta}{=}\sum_{\forall S_1: S\subseteq S_1, T\subseteq ([K]\backslash S_1)}p_{S_1\overline{[K]\backslash S_1}}.\nonumber
\end{align}
We also say that a {\em strict total ordering} ``$\prec$" on $2^{[K]}$ is {\em cardinality-compatible} if
\begin{align}
\forall S_1, S_2\in 2^{[K]},\quad |S_1|<|S_2|\Rightarrow S_1\prec S_2.\nonumber
\end{align}
For example, for $K=3$, the following strict total ordering
\begin{align}
\emptyset\prec \{2\}\prec\{1\}\prec\{3\}\prec \{1,2\}\prec\{1,3\}\prec\{2,3\}\prec\{1,2,3\}\nonumber
\end{align}
is cardinality-compatible. 

\begin{proposition}\label{prop:ach2} Fix any cardinality-compatible, strict total ordering $\prec$.
For any 1-to-$K$ PEC with COF, a rate vector $(R_1,\cdots, R_K)$ can be achieved by a {\em linear network code}
if
there exist $2^K$ non-negative $x$ variables, indexed by $S\in 2^{[K]}$:
\begin{align}
\left\{x_S\geq 0:\forall S\in 2^{[K]}\right\},\label{eq:xs}
\end{align}
and $K3^{K-1}$ non-negative $w$ variables, indexed by $(k;S\rightarrow T)$ satisfying $T\subseteq S\subseteq ([K]\backslash k)$:
\begin{align}&\left\{w_{k;S\rightarrow T}\geq 0: \forall k\in[K],\forall S,T\in 2^{[K]},\right.\nonumber\\
&\hspace{3cm}\left.\text{satisfying }T\subseteq S\subseteq ([K]\backslash k)\right\},\label{eq:ws}
\end{align}
such that jointly the following linear inequalities\footnote{
There are totally $(1+K2^{K-1}+K3^{K-1})$ inequalities. More explicitly, \eqref{eq:total-x} describes one inequality. There are $K2^{K-1}$ inequalities having the form of \eqref{eq:coding-len}. There are totally $K3^{K-1}$ inequalities having the form of one of \eqref{eq:ind-length-0}, \eqref{eq:ind-length-1}, and \eqref{eq:ind-length-2}. For comparison, the outer bound in Proposition~\ref{prop:outer} actually has more inequalities asymptotically ($K!$ of them) than those in Proposition~\ref{prop:ach2}.} are satisfied:
\begin{align}
&\sum_{\forall S:S\in 2^{[K]}}x_S< 1\label{eq:total-x}\\
&\forall T\in 2^{[K]},\forall k\in T,\nonumber\\
&\hspace{2.5cm} x_T\geq \sum_{\forall S:(T\backslash k)\subseteq S\subseteq([K]\backslash k)}w_{k;S\rightarrow (T\backslash k)}\label{eq:coding-len}\\
&\forall k\in[K],\quad w_{k;\emptyset\rightarrow \emptyset}\cdot p_{\cup [K]}\geq R_k\label{eq:ind-length-0}
\end{align}
\begin{align}
&\forall k\in[K], \forall S\subseteq ([K]\backslash k), S\neq \emptyset, \nonumber\\
&\hspace{0cm}
\left(\sum_{\forall T_1: T_1\subseteq S} w_{k;S\rightarrow T_1}\right) p_{\cup ([K]\backslash S)}\geq \nonumber\\
&\hspace{.7cm}\sum_{\scriptsize \begin{array}{c}\forall S_1,T_1:\text{such that}\\
T_1\subseteq S_1\subseteq ([K]\backslash k),\\
T_1\subseteq S,S\nsubseteq S_1 \end{array}}w_{k;S_1\rightarrow T_1}\cdot f_p\left((S\backslash T_1)\overline{([K]\backslash S)}\right)\label{eq:ind-length-1}
\end{align}
\begin{align}
&\forall k\in[K], S,T\in 2^{[K]} \text{ satisfying } T\subseteq S\subseteq ([K]\backslash k), T\neq S,\nonumber\\
&\hspace{0cm}\left(w_{k;S\rightarrow T}+\sum_{\scriptsize\begin{array}{c}\forall T_1\subseteq S:\\
(T_1\cup\{k\})\prec (T\cup\{k\})\end{array}}w_{k; S\rightarrow T_1}\right)p_{\cup ([K]\backslash S)} \leq\nonumber\\
&\hspace{.5cm}
\sum_{\scriptsize\begin{array}{c}\forall S_1: S_1\prec S,\\
T\subseteq S_1\subseteq ([K]\backslash k)\end{array}}w_{k;S_1\rightarrow T} \cdot f_p\left((S\backslash T)\overline{([K]\backslash S)}\right)+\nonumber\\
&\hspace{.7cm}\sum_{\scriptsize \begin{array}{c}\forall S_1,T_1:\text{such that}\\
T_1\subseteq S_1\subseteq ([K]\backslash k),\\
(T_1\cup\{k\})\prec (T\cup\{k\}),\\
T_1\subseteq S,S\nsubseteq S_1 \end{array}}w_{k;S_1\rightarrow T_1}\cdot f_p\left((S\backslash T_1)\overline{([K]\backslash S)}\right).\label{eq:ind-length-2}
\end{align}
\end{proposition}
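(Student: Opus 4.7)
My plan is to prove achievability by constructing an explicit multi-phase linear network coding scheme, the \emph{packet evolution method}, and then interpreting the inequalities \eqref{eq:total-x}--\eqref{eq:ind-length-2} as the bookkeeping / flow-conservation equations that certify the scheme's feasibility. Concretely, I would index the phases by subsets $S\in 2^{[K]}$ and process them in the order prescribed by the cardinality-compatible ordering $\prec$. During phase $S$ (of length $n\cdot x_S$ time slots), the source transmits random $\GF(q)$-linear combinations of ``session-$k$ packets that are currently in state $S$'', for those sessions $k\notin S$. Here, a session-$k$ packet is in \emph{state} $S$ when the side information at the receivers can be summarized by: the receivers in $S$ each possess one linear combination involving that packet, while receivers outside $S\cup\{k\}$ and the intended receiver $d_k$ have nothing relevant yet. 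The variable $w_{k;S\to T}$ would then denote the (normalized) number of session-$k$ packets that are transmitted from state $S$ and, as a result of the current phase, become ``resolved into'' state $T$ (i.e., their leftover aligned interference is held only by $d_j$, $j\in T$).

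Next, I would verify that the constraints have the intended operational meaning. Constraint \eqref{eq:total-x} simply asserts that the total phase budget does not exceed $n$ slots. Constraint \eqref{eq:coding-len} says that in phase $T$, the time $x_T$ is long enough to actually transmit all the coded combinations that arrive into state $T$ across the different sessions $k\in T$. Constraint \eqref{eq:ind-length-0} is the initialization: $w_{k;\emptyset\to\emptyset}p_{\cup[K]}\geq R_k$ forces phase-$\emptyset$ transmissions to deliver enough innovative side information so that, across all $K$ receivers, every one of the $nR_k$ source symbols is heard at least once. Constraints \eqref{eq:ind-length-1} and \eqref{eq:ind-length-2} are the evolution / alignment balance equations: the left-hand sides count packets leaving state $S$ (either reaching the intended receiver $d_k$ with probability $p_{\cup([K]\backslash S)}$, or, in \eqref{eq:ind-length-2}, being coded together with packets in ``earlier'' states according to $\prec$), and the right-hand sides count packets entering state $S$ from phases with $S_1\prec S$ that were overheard with the exact pattern $(S\setminus T_1)\overline{([K]\setminus S)}$, captured by $f_p(\cdot)$.

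The core technical step is to show that \emph{code alignment} works, i.e., that the random linear combinations sent in phase $S$ can be chosen so that every receiver $d_j$, $j\in S$, can cancel its accumulated side information without having to decode any individual source symbol. I would pick the combining coefficients i.i.d.\ uniformly in $\GF(q)$, and then argue via a Schwartz--Zippel type argument (polynomial identity testing over a sufficiently large field) that, with probability $1-o(1)$ as $q\to\infty$, the coded packets received by each $d_k$ span exactly the $R_k$-dimensional subspace containing $\X_k$ modulo the aligned interference that $d_k$ is able to subtract. Because the coefficient polynomials are of degree $O(K\cdot 2^K)$, this step is standard once the alignment structure is in place. A concurrent concentration argument (Chernoff/Azuma on the number of erasures per phase) shows that, for sufficiently large $n$, the empirical count of packets transitioning to each state is tightly concentrated around its mean, which is exactly the right-hand side of \eqref{eq:ind-length-1}--\eqref{eq:ind-length-2}.

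The principal obstacle, in my view, is not the concentration or the field-size argument but the bookkeeping of the alignment itself: I must specify, for each phase $S$ and each session $k\notin S$, a structured way of combining ``new'' packets with the ``aligned interference tails'' from earlier phases so that (a) every receiver $d_j$, $j\in S$, sees only a combination of quantities it already knows, and (b) the innovation delivered to $d_k$ is guaranteed to be linearly independent of what $d_k$ has received before. Handling (a) is precisely why the ordering $\prec$ must be cardinality-compatible: when phase $S$ is processed, every side-information pattern with a strictly smaller cardinality has already been produced, so the interference at $d_j\in S$ can always be expressed in terms of quantities existing before phase $S$. Handling (b) is where the inequality structure in \eqref{eq:ind-length-2} becomes essential, as the ``$(T_1\cup\{k\})\prec (T\cup\{k\})$'' summations enumerate exactly those earlier coding opportunities that can be safely bundled into the current combination without destroying innovation. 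Once this combinatorial encoding is pinned down, the rest is a direct translation from the evolution equations to the LP stated in the proposition, yielding $\Pr(\hat{\X}_k\neq \X_k)<\epsilon$ for every $k\in[K]$.
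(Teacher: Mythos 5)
Your high-level plan is the same as the paper's: a multi-phase packet-evolution scheme whose feasibility is certified by reading \eqref{eq:total-x}--\eqref{eq:ind-length-2} as time-budget and flow-conservation constraints, with a Schwartz--Zippel argument over large $\GF(q)$ (this is the role of the paper's Lemma~\ref{lem:decodability}) and concentration over $n$ handling the probabilistic details. However, the concrete combinatorial step --- which you yourself identify as the principal obstacle --- is not just left open in your write-up; as stated it is wrong. You mix, in phase $S$, packets from several sessions $k\notin S$ that are all in the \emph{same} state $S$. Then for two mixed sessions $k_1\neq k_2$, receiver $d_{k_1}$ (which lies outside $S$) holds no linear combination involving the session-$k_2$ packet, so it cannot cancel that packet's contribution, and alignment fails at the first coded phase. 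The correct rule (Line~\ref{line:targeting-set} of the paper's PE scheme) is: pick a target set $T$ and, for each $k\in T$, a session-$k$ packet whose \emph{own} overhearing set satisfies $S(X_{k,j_k})\supseteq T\backslash k$. The mixed packets generally sit in \emph{different} states; what matters is that each has been overheard by all the \emph{other targeted} receivers, not that they share a common state.

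This misreading propagates into your interpretation of the LP, which is why your ``verification that the constraints have the intended operational meaning'' would not go through. In $w_{k;S\to T}$ the index constraint is $T\subseteq S$, so under your reading (``transmitted from state $S$ and resolved into state $T$'') overhearing sets could only shrink, contradicting your own narrative of accumulating side information. In the paper, $w_{k;S\to T}$ counts normalized \emph{time slots} in which a state-$S$ session-$k$ packet is transmitted with target set $T\cup\{k\}$, and the {\sc Update} rule (Line~\ref{line:S-update}) moves the packet to state $T\cup S_\text{rx}$ --- the stale overhearing $S\setminus T$ is deliberately discarded because the new coding vector $\vv_\text{tx}$ preserves non-interference only for $T\cup S_\text{rx}\cup\{k\}$. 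This is exactly what produces the inflow terms $f_p\left((S\backslash T_1)\overline{([K]\backslash S)}\right)$ and the condition $S\nsubseteq S_1$ in \eqref{eq:ind-length-1}--\eqref{eq:ind-length-2}; your state-transition semantics cannot reproduce them. Likewise, \eqref{eq:coding-len} is one inequality \emph{per pair} $(T,k)$, not a sum over $k\in T$: the sessions in $T$ share the same $x_T$ slots (that sharing \emph{is} the coding gain), whereas your phrasing ``across the different sessions $k\in T$'' suggests summing, which would collapse the inner bound to time sharing. Until the selection rule, the slot-count semantics of $w$, and the $T\cup S_\text{rx}$ update are fixed as above, the translation from the scheme to the LP --- the heart of the proposition --- is not established.
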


%
%
%
{\em Remark:} For some general classes of PEC parameters, one can prove that the inner bound of Proposition~\ref{prop:ach2} is indeed the capacity region for arbitrary $K\geq 4$ values. Two such classes are discussed in the next subsection.

\subsection{Capacity Results For Two Classes of 1-to-$K$ PECs\label{subsec:special}}

We first focus on {\em symmetric} broadcast PECs.
\begin{definition}
A 1-to-$K$ broadcast PEC is {\em symmetric} if the channel parameters $\left\{p_{S\overline{[K]\backslash S}}:\forall S\in 2^{[K]}\right\}$ satisfy
\begin{align}
\forall S_1,S_2\in 2^{[K]}\text{ with }|S_1|=|S_2|,~p_{S_1\overline{[K]\backslash S_1}}=p_{S_2\overline{[K]\backslash S_2}}.\nonumber
\end{align}
\end{definition}

\begin{proposition}\label{prop:cap-sym}
For any symmetric 1-to-$K$ broadcast PEC with COF, the  capacity outer bound in Proposition~\ref{prop:outer} is indeed the corresponding capacity region.
\end{proposition}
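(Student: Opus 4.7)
\noindent\emph{Proof proposal.} Proposition~\ref{prop:outer} already supplies the converse, so only achievability needs to be argued: every rate vector $(R_1,\ldots,R_K)$ strictly inside the $K!$ inequalities \eqref{eq:pi-outer} must be shown to be achievable. The plan is to instantiate the LP-based inner bound of Proposition~\ref{prop:ach2} with a symmetry-preserving choice of $(x_S,\{w_{k;S\rightarrow T}\})$ and to verify LP feasibility whenever the outer-bound inequalities hold.

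First I would exploit symmetry to collapse the description. Under the symmetry hypothesis $p_{\cup S}$ depends only on $|S|$ and $f_p(S\overline{T})$ depends only on $(|S|,|T|)$; write $q_j\stackrel{\Delta}{=}p_{\cup S}$ for any $|S|=j$. The outer bound becomes $\sum_{j=1}^K R_{\pi(j)}/q_j\leq 1$ for every permutation $\pi$. Because both the channel law and the LP of Proposition~\ref{prop:ach2} are invariant under relabeling of receivers, I may assume without loss of generality that $R_1\geq R_2\geq\cdots\geq R_K$; the identity permutation then supplies the unique binding constraint $\sum_{j=1}^K R_j/q_j\leq 1$ and all $K!-1$ others are strictly slack.

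Next I would translate the $K$-phase \emph{packet-evolution} scheme outlined in Section~\ref{subsec:example} into the LP variables of Proposition~\ref{prop:ach2}. The idea is to run $K$ phases: Phase~$j$ transmits code-aligned linear combinations targeting a set of still-unsatisfied receivers of size $K-j+1$, so a single Phase-$j$ transmission is useful with probability $q_{K-j+1}$. A session-$k$ packet whose overhearing set is $S\not\ni k$ is represented by the weight $w_{k;S\rightarrow T}$, with $T\subseteq S$ recording which sessions' side-information the next coded transmission will cancel via code alignment. Setting $x_S$ proportional to the total airtime in which the transmitted combination has evolution signature $S$, the dual role of each phase (deliver to intended receivers while simultaneously manufacturing coding opportunities for subsequent phases) converts \eqref{eq:ind-length-1}--\eqref{eq:ind-length-2} into flow-conservation identities between the rate at which Phase~$j$ produces level-$(j{+}1)$ overheard packets and the rate at which Phase~$j{+}1$ consumes them.

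The main technical obstacle I anticipate is the algebraic verification of \eqref{eq:ind-length-1}--\eqref{eq:ind-length-2} under this symmetric ansatz. The right-hand sides involve sums of $f_p((S\backslash T)\overline{([K]\backslash S)})$ over many $(S_1,T_1)$ pairs; in the symmetric case they collapse to binomial expressions in the $q_j$'s, and one must prove that the $w$-values dictated by the phase schedule satisfy each inequality, preferably with equality so that the scheme is lossless. I expect this to reduce to an inclusion--exclusion relation between $q_j$ and the elementary $f_p(\cdot\,\overline{\cdot})$ values, combined with the cardinality-compatibility of $\prec$, which guarantees that level-$j$ consumption is scheduled strictly later than level-$j$ production. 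Once these identities are confirmed, summing \eqref{eq:coding-len} over $T\in 2^{[K]}$ and substituting \eqref{eq:ind-length-0} makes the right-hand side telescope to $\sum_{j=1}^K R_j/q_j$, so the sum-airtime constraint \eqref{eq:total-x} is implied by the binding outer-bound inequality; an $\epsilon$-perturbation followed by $n,q\to\infty$ then matches the outer bound and completes the argument.
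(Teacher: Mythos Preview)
Your plan follows the same route the paper indicates: Proposition~\ref{prop:outer} gives the converse, and achievability comes from the packet-evolution machinery, which you access through the LP of Proposition~\ref{prop:ach2}. The paper in fact omits the proof of Proposition~\ref{prop:cap-sym} entirely (``due to the limit of space''), deferring it to the general PE throughput analysis sketched in Section~\ref{sec:achievability}; so your symmetry reduction---relabel so that $R_1\geq\cdots\geq R_K$ and use the rearrangement inequality to collapse the $K!$ constraints to the single binding one $\sum_j R_j/q_j\leq 1$---is already more explicit than what the paper provides, and is the correct first move.

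One small point of terminology to clean up before you flesh this out: in the PE scheme it is the \emph{mixing set} $T$ that has size $j$ in ``Phase~$j$,'' while the set whose reception probability governs the phase is $[K]\backslash S$ for the overhearing set $S$ of cardinality $j-1$; your sentence ``targeting a set of still-unsatisfied receivers of size $K-j+1$'' conflates these two roles. This does not affect the plan, but getting it straight will matter when you actually populate the $w_{k;S\rightarrow T}$ variables. The part you flag as the main obstacle---verifying \eqref{eq:ind-length-1}--\eqref{eq:ind-length-2} under the symmetric ansatz and showing the total airtime collapses to $\sum_j R_j/q_j$---is indeed where all the work sits, and the paper gives no further guidance there either.
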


In addition to perfect channel symmetry, another practical setting is to allow channel asymmetry while assuming {\em spatial independence} between different destinations $d_i$. 

\begin{definition}
A 1-to-$K$ broadcast PEC is {\em spatially independent} if the channel parameters $\left\{p_{S\overline{[K]\backslash S}}:\forall S\in 2^{[K]}\right\}$ satisfy
\begin{align}
\forall S\in2^{[K]},~ p_{S\overline{[K]\backslash S}}=\left(\prod_{k\in S}p_k\right)\left(\prod_{k\in [K]\backslash S} (1-p_k)\right),\nonumber
\end{align}
where $p_k$ is the marginal success probability of destination $d_k$.
\end{definition}


To describe the capacity results for spatially independent 1-to-$K$ PECs, we need the following additional definition.

\begin{definition}
Consider a 1-to-$K$ broadcast PEC with marginal success probabilities $p_1$ to $p_K$. 
We say a rate vector $(R_1,\cdots, R_K)$ is {\em one-sidedly fair} if
$\forall i\neq j$ satisfying $p_i\leq p_j$, we have $R_i(1-p_i)\geq R_j(1-p_j)$. We use $\Lambda_{\text{osf}}$ to denote the collection of all one-sidedly fair rate vectors.
\end{definition}

The one-sided fairness contains many practical scenarios of interest. For example, the perfectly fair rate vector $(R,R,\cdots, R)$ by definition is also one-sidedly fair. Another example is when $\min(p_1,\cdots, p_K)\geq \frac{1}{2}$, a proportionally fair rate vector 
$(p_1R, p_2R,\cdots, p_KR)$ is also one-sidedly fair.

For the following, we provide the capacity of spatially independent 1-to-$K$ PECs with COF under the condition of one-sided fairness.

\begin{proposition}\label{prop:cap-osf}
Suppose the 1-to-$K$ PEC of interest is spatially independent and the marginal success probabilities satisfy $0<p_1\leq p_2\leq \cdots\leq p_K$, which can be achieved by relabeling. Any $(R_1,\cdots, R_K)\in \Lambda_\text{osf}$ is in the capacity region if and only if $(R_1,\cdots, R_K)\in \Lambda_\text{osf}$ satisfies
\begin{align}
\sum_{k=1}^K\frac{R_k}{1-\prod_{l=1}^k(1-p_l)}\leq  1.\label{eq:osf-cap}
\end{align}
\end{proposition}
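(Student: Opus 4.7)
The converse is immediate from Proposition~\ref{prop:outer}. Taking $\pi$ to be the identity permutation yields $S_j^\pi = [j]$, and spatial independence gives $p_{\cup [j]} = 1 - \prod_{l=1}^j(1-p_l)$. Substituting into \eqref{eq:pi-outer} recovers \eqref{eq:osf-cap} verbatim. Hence every achievable rate vector, one-sidedly fair or not, satisfies \eqref{eq:osf-cap}.

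\textbf{Achievability plan.} For the ``if'' direction, the plan is to exhibit an explicit feasible point of the linear program in Proposition~\ref{prop:ach2} for any $(R_1,\ldots,R_K)\in\Lambda_{\text{osf}}$ satisfying \eqref{eq:osf-cap} with strict inequality. I would restrict the support of the $x$-variables to the nested prefix family $\{[1],[2],\ldots,[K]\}$, setting $x_{[k]}=R_k/p_{\cup[k]}$ and $x_S=0$ for all other $S$; this choice makes \eqref{eq:total-x} coincide with the left-hand side of \eqref{eq:osf-cap} and is therefore strictly below $1$ by hypothesis. Correspondingly, the $w$-variables $w_{k;S\to T}$ would be nonzero only for shapes $(S,T)$ that are compatible with this prefix hierarchy, using the natural cardinality-compatible ordering $\prec$ in which subsets of the form $[j]$ precede their permutations of equal size. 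In operational terms, this assignment describes a $K$-phase packet-evolution scheme in which phase $k$ transmits coded combinations of sessions $\{1,\ldots,k\}$; each such combination is ``useful'' whenever any $d_l$ with $l\in [k]$ receives it (giving the $p_{\cup [k]}$ denominator), and the network-code-alignment idea of Section~\ref{subsec:example} lets the weaker receivers $d_l$, $l<k$, cancel out the session-$l$ content they have already overheard in earlier phases.

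\textbf{Role of one-sided fairness.} Under the prefix ansatz above, the product form $f_p(S\overline{T})=\prod_{i\in S}p_i\prod_{j\in T}(1-p_j)$ from spatial independence lets the alignment-dynamics inequalities \eqref{eq:coding-len}--\eqref{eq:ind-length-2} be rewritten as comparisons between ``generation rates'' and ``consumption rates'' of each alignment type, which after cancellation of common erasure factors reduce to pairwise statements of the form $R_l(1-p_l)\ge R_k(1-p_k)$ for index pairs $l\le k$. The hypothesis $p_1\le\cdots\le p_K$ combined with one-sided fairness is precisely the telescoped chain $R_1(1-p_1)\ge R_2(1-p_2)\ge\cdots\ge R_K(1-p_K)$, which supplies exactly these inequalities in the correct direction. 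Conversely, violating one-sided fairness would make some phase $k$ consume overheard types faster than earlier phases can generate them, so the prefix ansatz would fail and tighter constraints from non-identity permutations in Proposition~\ref{prop:outer} would take over.

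\textbf{Main obstacle.} The technical core is verifying the family \eqref{eq:ind-length-2} uniformly in $K$: one must show that under the prefix support the supply of each ``alignment type'' produced in phases $1,\ldots,k-1$ always dominates the demand created in phase $k$. I expect this to require induction on $k$, carrying forward a vector of counters indexed by the subset $T$ of earlier sessions that a given receiver $d_l$ can cancel from a phase-$k$ packet, and then collapsing the $K3^{K-1}$ inequalities \eqref{eq:ind-length-2} to the single condition $R_l(1-p_l)\ge R_k(1-p_k)$ via telescoping. Handling the bookkeeping over all triples $(k,S,T)$ with $T\subseteq S\subseteq ([K]\setminus\{k\})$ in a single uniform argument, rather than case by case in $K$, is where I anticipate the main difficulty; the overall architecture of the proof should closely parallel the symmetric-channel case of Proposition~\ref{prop:cap-sym}, with one-sided fairness playing the role that full symmetry plays there.
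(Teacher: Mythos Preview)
Your converse is correct, and your high-level plan for achievability---exhibit a feasible point of the linear program in Proposition~\ref{prop:ach2}---is exactly the route the paper intends (the paper omits the details, but the remark after Proposition~\ref{prop:ach2} makes clear that Propositions~\ref{prop:cap-sym} and~\ref{prop:cap-osf} are meant to be corollaries of that inner bound).

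However, your concrete ansatz is infeasible. You propose $x_{[k]}=R_k/p_{\cup[k]}$ on the prefix chain and $x_S=0$ elsewhere. Combine \eqref{eq:ind-length-0} and the instance of \eqref{eq:coding-len} with $T=\{k\}$: since $T\backslash k=\emptyset$, \eqref{eq:coding-len} reads
\[
x_{\{k\}}\;\ge\;\sum_{S\subseteq[K]\backslash k} w_{k;S\to\emptyset}\;\ge\;w_{k;\emptyset\to\emptyset}\;\ge\;\frac{R_k}{p_{\cup[K]}},
\]
so every singleton $x_{\{k\}}$ must be strictly positive whenever $R_k>0$. For $k\ge 2$ the set $\{k\}$ is not a prefix, and your assignment gives $x_{\{k\}}=0$; the LP is violated before one even reaches \eqref{eq:ind-length-1}--\eqref{eq:ind-length-2}. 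Operationally this is the statement that a \emph{fresh} session-$k$ packet has overhearing set $\emptyset$, hence by Line~\ref{line:targeting-set} of the PE scheme it can only be placed in a transmission with target $T=\{k\}$; your ``phase $k$ sends combinations of sessions $\{1,\ldots,k\}$'' cannot be how session-$k$ packets are launched.

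The correct feasible point therefore must have many more nonzero $x_S$: at minimum all singletons $\{k\}$, and in fact a full lattice of intermediate sets through which each session's packets evolve before reaching a state where they can be mixed into a large target. Your intuition that the one-sided-fairness inequalities $R_i(1-p_i)\ge R_j(1-p_j)$ are what ultimately make the bookkeeping close is sound, but it enters at the level of \eqref{eq:ind-length-1}--\eqref{eq:ind-length-2} (balancing the supply of overheard packets at each overhearing set $S$ against the demand when those packets are promoted to larger targets), not at the level of \eqref{eq:total-x}. The total-time budget \eqref{eq:total-x} will \emph{not} collapse term-by-term to $\sum_k R_k/p_{\cup[k]}$; rather, the various $x_S$ must be chosen so that after summation the cross terms telescope to that quantity. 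Getting the right $w_{k;S\to T}$ (essentially solving a flow problem on the Hasse diagram of $2^{[K]\backslash k}$ for each $k$) is where the real work lies, and your proposal has not yet identified that structure.
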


Namely, Proposition~\ref{prop:outer} is indeed the capacity region when focusing on the one-sidedly fair rate region $\Lambda_\text{osf}$.


\section{The Packet Evolution Schemes\label{sec:achievability}}

We now describe a new class of coding schemes, termed the {\em packet evolution} (PE) scheme, which 
is the building block of the capacity /  achievability results in Section~\ref{sec:main}.


\subsection{Description Of The Packet Evolution Scheme\label{subsec:PE}}
Recall that each $(s,d_k)$ session has $nR_k$ information packets $X_{k,1}$ to $X_{k,nR_k}$. We associate each of the $\sum_{k=1}^KnR_k$ information packets  with {\em an intersession coding vector} $\vv$ and a set $S\subseteq [K]$. An intersession coding vector is a $\left(\sum_{k=1}^KnR_k\right)$-dimensional row vector with each coordinate being a scalar in $\GF(q)$. Before the start of the broadcast, for any $k\in[K]$ and $j\in[nR_k]$ we initialize the corresponding vector $\vv$ of $X_{k,j}$ in a way that the only nonzero coordinate of $\vv$ is the coordinate corresponding to $X_{k,j}$ and all other coordinates are zero. Without loss of generality, we set the value of the only non-zero coordinate to one. That is, initially the coding vectors $\vv$ are set to the elementary basis vectors. 

 For any $k\in[K]$ and $j\in[nR_k]$ the set $S$ of $X_{k,j}$ is initialized to $\emptyset$. We call $S$ the {\em overhearing set} 
 of the packet $X_{k,j}$. We use $\vv(X_{k,j})$ and $S(X_{k,j})$ to denote the intersession coding vector and the overhearing set of a given $X_{k,j}$.

Throughout the $n$ broadcast time slots, $s$ constantly updates $S(X_{k,j})$ and $\vv(X_{k,j})$ according to the COF. The main structure of a packet evolution scheme can now be described as follows.

\algtop{The Packet Evolution Scheme}
\begin{algorithmic}[1]

\STATE Source $s$ maintains a flag ${\mathsf{f}}_{\text{change}}$. Initially, set ${\mathsf{f}}_{\text{change}}\leftarrow 1$.
\FOR{$t=1,\cdots, n$, }
\STATE In the beginning of the $t$-th time slot, do Lines~\ref{line:begin1} to~\ref{line:transmit-v}.
\IF{${\mathsf{f}}_{\text{change}}= 1$\label{line:begin1}}
\STATE Choose a non-empty subset $T\subseteq [K]$.
\STATE Run a subroutine {\sc Packet Selection}, which takes $T$ as input and outputs a collection of $|T|$ packets $\{X_{k,j_k}: \forall k\in T\}$, termed the {\em target packets}. The output $\{X_{k,j_k}\}$ must satisfy $(S(X_{k,j_k})\cup \{k\})\supseteq T$ for all $k\in T$.\label{line:targeting-set}

\STATE Generate $k$ uniformly random coefficients $c_k\in \GF(q)$ and construct an intersession coding vector $\vv_\text{tx}\leftarrow \sum_{k\in T}c_k\cdot \vv(X_{k,j_k})$.\label{line:vv-construct}

\STATE Set ${\mathsf{f}}_{\text{change}}\leftarrow 0$.
\ENDIF

\STATE Sends out a linearly intersession coded packet according to the coding vector $\vv_\text{tx}$. That is, we send
\begin{align}Y_\text{tx}=\vv_\text{tx}\cdot (X_{1,1},\cdots, X_{K,nR_K})^\tran\nonumber
\end{align}  where $(X_{1,1},\cdots, X_{K,nR_K})^\tran$ is a column vector consisting of all information symbols.
\label{line:transmit-v}



\STATE In the end of the $t$-th time slot, use a subroutine {\sc Update} to revise the $\vv(X_{k,j_k})$ and $S(X_{k,j_k})$ values of all target packets $X_{k,j_k}$ based on the COF.

\IF{the $S(X_{k,j_k})$ value changes for at least one target packet $X_{k,j_k}$ after the {\sc Update}}
\STATE Set ${\mathsf{f}}_{\text{change}}\leftarrow 1$.
\ENDIF

\ENDFOR

\end{algorithmic} \algbot

In summary, a group of target packets $\{X_{k,j_k}\}$ are selected according to the choice of the subset $T$. The corresponding vectors $\{\vv(X_{k,j_k})\}$ are used to construct a coding vector $\vv_\text{tx}$. The same coded packet $Y_\text{tx}$, corresponding to $\vv_\text{tx}$, is then sent repeatedly for many time slots until one of the target packets $X_{k,j_k}$ {\em evolves} (when the corresponding $S(X_{k,j_k})$ changes). Then a new subset $T$ is chosen and the process is repeated until we use up all $n$ time slots. Three subroutines are used as the building blocks of a packet evolution method: (i) How to choose the non-empty $T\subseteq [K]$; (ii) For each $k\in[K]$, how to select a single target packets $X_{k,j_k}$ among all $X_{k,j}$ satisfying $(S(X_{k,j})\cup \{k\})\supseteq T$; and (iii) How to update the coding vectors $\vv(X_{k,j_k})$ and the overhearing sets $S(X_{k,j_k})$. We first describe the detailed update rule of (iii).

\algtop{Update of $S(X_{k,j_k})$ and $\vv(X_{k,j_k})$}
\begin{algorithmic}[1]
\STATE {\bf Input:} The $T$ and $\vv_\text{tx}$ used for transmission in the current time slot; And $S_\text{rx}$, the set of destinations $d_i$ that receive the transmitted coded packet in the current time slot. 
\FOR{all $k\in T$}\label{line:update-4-K}
\IF{$S_\text{rx}\nsubseteq S(X_{k,j_k})$}
\STATE Set $S(X_{k,j_k})\leftarrow (T\cap S(X_{k,j_k}))\cup S_\text{rx}$.\label{line:S-update}
 \STATE Set $\vv(X_{k,j_k})\leftarrow \vv_\text{tx}$.\label{line:v-update}
\ENDIF

\ENDFOR

\end{algorithmic} \algbot

\vspace{.2cm}
\noindent {\em An Illustrative Example Of The PE Scheme:}

\vspace{.2cm}
Let us revisit the optimal coding scheme of the example in Fig.~\ref{fig:example} of Section~\ref{subsec:example}. After initialization, the three information packets $X_{1}$ to $X_3$ have the corresponding $\vv$ and $S$: $\vv(X_1)=(1,0,0)$, $\vv(X_2)=(0,1,0)$, and $\vv(X_3)=(0,0,1)$, and $S(X_1)=S(X_2)=S(X_3)=\emptyset$. We use the following table
for summary.

\begin{center}\begin{tabular}{|c|c|c|}
\hline
$X_1$: (1,0,0),$\emptyset$ &$X_2$: (0,1,0),$\emptyset$ & $X_3$: (0,0,1),$\emptyset$ \\
\hline
\end{tabular}
\end{center}

Consider a duration of 5 time slots.

Slot 1: Suppose that $s$ chooses $T=\{1\}$. Since $(\emptyset\cup \{1\})\supseteq T$, {\sc Packet Selection} outputs $X_1$. The coding vector $\vv_\text{tx}$ is thus a scaled version of $\vv(X_1)=(1,0,0)$. Without loss of generality, we choose $\vv_\text{tx}=(1,0,0)$. Based on $\vv_\text{tx}$, $s$ transmits a packet $1 X_1+0 X_2+0 X_3=X_1$. Suppose $[X_1]$ is received by $d_2$, i.e., $S_\text{rx}=\{2\}$. Then during {\sc Update}, $S_\text{rx}=\{2\}\nsubseteq S(X_1)=\emptyset$. {\sc Update} thus sets $S(X_1)=\{2\}$ and $\vv(X_1)=\vv_\text{tx}=(1,0,0)$. The packet summary becomes

\begin{center}\begin{tabular}{|c|c|c|}
\hline
$X_1$: (1,0,0),$\{2\}$ &$X_2$: (0,1,0),$\emptyset$ & $X_3$: (0,0,1),$\emptyset$ \\
\hline
\end{tabular}.
\end{center}

Slot 2: Suppose that $s$ chooses $T=\{2\}$. Since $(\emptyset\cup \{2\})\supseteq T$,  {\sc Packet Selection} outputs $X_2$. The coding vector $\vv_\text{tx}$ is thus a scaled version of $\vv(X_2)=(0,1,0)$. Without loss of generality, we choose $\vv_\text{tx}=(0,1,0)$ and accordingly $[X_2]$ is sent. Suppose $[X_2]$ is received by $d_1$, i.e., $S_\text{rx}=\{1\}$. Since $S_\text{rx}\nsubseteq S(X_2)$, after {\sc Update} the packet summary becomes

\begin{center}\begin{tabular}{|c|c|c|}
\hline
$X_1$: (1,0,0),$\{2\}$ &$X_2$: (0,1,0),$\{1\}$ & $X_3$: (0,0,1),$\emptyset$ \\
\hline
\end{tabular}.
\end{center}

Slot 3: Suppose that $s$ chooses $T=\{3\}$ and {\sc Packet Selection} outputs $X_3$. $\vv_\text{tx}$ is thus a scaled version of $\vv(X_3)=(0,0,1)$, and we choose $\vv_\text{tx}=(0,0,1)$. Accordingly $[X_3]$ is sent. Suppose $[X_3]$ is received by $\{d_1,d_2\}$, i.e., $S_\text{rx}=\{1,2\}$. Then after {\sc Update}, the summary becomes

\begin{center}\begin{tabular}{|c|c|c|}
\hline
$X_1$: (1,0,0),$\{2\}$ &$X_2$: (0,1,0),$\{1\}$ & $X_3$: (0,0,1),$\{1,2\}$ \\
\hline
\end{tabular}.
\end{center}

Slot 4: Suppose that $s$ chooses $T=\{1,2\}$. Since $(S(X_1)\cup \{1\})\supseteq T$ and $(S(X_2)\cup \{2\})\supseteq T$, {\sc Packet Selection} outputs $\{X_1,X_2\}$. $\vv_\text{tx}$ is thus a linear combination of $\vv(X_1)=(1,0,0)$ and $\vv(X_2)=(0,1,0)$. Without loss of generality, we choose $\vv_\text{tx}=(1,1,0)$ and accordingly $[X_1+X_2]$ is sent. Suppose $[X_1+X_2]$ is received by $d_3$, i.e., $S_\text{rx}=\{3\}$. Then during {\sc Update}, for $X_1$,  $S_\text{rx}=\{3\}\nsubseteq S(X_1)=\{2\}$. {\sc Update} thus sets $S(X_1)=\{2,3\}$ and $\vv(X_1)=\vv_\text{tx}=(1,1,0)$. For $X_2$, $S_\text{rx}=\{3\} \nsubseteq S(X_2)=\{1\}$. {\sc Update} thus sets $S(X_2)=\{1,3\}$ and $\vv(X_2)=\vv_\text{tx}=(1,1,0)$.  The summary becomes


\begin{center}\begin{tabular}{|c|c|}
\hline
$X_1$: (1,1,0),$\{2,3\}$ &$X_2$: (1,1,0),$\{1,3\}$\\
\hline
 $X_3$: (0,0,1),$\{1,2\}$  &~ \\
\hline
\end{tabular}.
\end{center}


Slot 5: Suppose that $s$ chooses $T=\{1,2,3\}$. By Line~\ref{line:targeting-set} of {\sc The Packet Evolution Scheme}, the subroutine {\sc Packet Selection} outputs $\{X_1,X_2,X_3\}$. $\vv_\text{tx}$ is thus a linear combination of $\vv(X_1)=(1,1,0)$, $\vv(X_2)=(1,1,0)$, and $\vv(X_3)=(0,0,1)$, which is of the form $\alpha(X_1+X_2)+\beta X_3$. {\em Note that the packet evolution scheme automatically achieves code alignment}, which is the key component of the optimal coding policy in Section~\ref{subsec:example}. Without loss of generality, we choose $\alpha=\beta=1$ and $\vv_\text{tx}=(1,1,1)$. $Y_\text{tx}=[X_1+X_2+X_3]$ is sent accordingly. Suppose $[X_1+X_2+X_3]$ is received by $\{d_1,d_2,d_3\}$, i.e., $S_\text{rx}=\{1,2,3\}$. Then after {\sc Update}, the summary of the packets becomes
\begin{center}\begin{tabular}{|c|c|}
\hline
$X_1$: (1,1,1),$\{1,2,3\}$ &$X_2$: (1,1,1),$\{1,2,3\}$\\
\hline
  $X_3$: (1,1,1),$\{1,2,3\}$ &~ \\
\hline
\end{tabular}.
\end{center}

From the above step-by-step illustration, we see that the optimal coding policy in Section~\ref{subsec:example} is a special case of a packet evolution scheme.

\subsection{Properties of A Packet Evolution Scheme}

We term the packet evolution (PE) scheme in Section~\ref{subsec:PE} a {\em generic} PE method since it does not depend on how to choose $T$ and the target packets $X_{k,j_k}$ and only requires the output of {\sc Packet Selection} satisfying
$(S(X_{k,j_k})\cup \{k\})\supseteq T, \forall k\in T$. In this subsection, we state some key properties of any generic PE scheme. The intuition of the PE scheme is based on these key properties and will be discussed in Section~\ref{subsec:intuition-PE}.

We first define the following notation for any linear network codes. (Note that the PE scheme is a linear network code.)


\begin{definition}
Consider any linear network code. For any destination $d_k$, each of the received packet $Z_k(t)$ can be represented by a vector $\w_k(t)$, which is a $\left(\sum_{k=1}^KnR_k\right)$-dimensional vector containing the coefficients used to generate $Z_k(t)$. That is, $Z_k(t)=\w_k(t)\cdot (X_{1,1},\cdots, X_{K,nR_K})^\tran$. If $Z_k(t)$ is an erasure, we simply set $\w_k(t)$ to be an all-zero vector. The {\em knowledge space} of destination $d_k$ in the end of time $t$ is denoted by $\Omega_{\text{Z},k}(t)$, which is the linear span of $\w_k(\tau)$, $\tau\leq t$. That is, $\Omega_{\text{Z},k}(t)\stackrel{\Delta}{=}\linsp(\w_k(\tau):\forall \tau\in[t])$.
\end{definition}

\begin{definition} For any non-coded information packet $X_{k,j}$, the corresponding intersession coding vector is a $\left(\sum_{k=1}^KnR_k\right)$-dimensional vector with a single one in the corresponding coordinate and all other coordinates being zero. We use $\delta_{k,j}$ to denote such a delta vector. The message space of $d_k$ is then defined as $\Omega_{M,k}=\linsp(\delta_{k,j}:\forall j\in[nR_k])$.
\end{definition}

The above definitions imply the following straightforward lemma:
\begin{lemma}\label{lem:simple-dec}
In the end of time $t$, destination $d_k$ is able to decode all the desired information packets $X_{k,j}$, $\forall j\in [nR_k]$, if and only if $\Omega_{M,k}\subseteq \Omega_{Z,k}(t)$.
\end{lemma}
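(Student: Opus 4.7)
The plan is to prove both implications by direct linear-algebraic arguments, exploiting the fact that every received symbol $Z_k(\tau)$ is, by construction, the inner product $\w_k(\tau)\cdot(X_{1,1},\ldots,X_{K,nR_K})^{\tran}$ of a publicly known coding vector with the (unknown) column of all information symbols. Throughout I will use that the information symbols are i.i.d.\ uniform on $\GF(q)$, that the code is linear, and that the coding vectors $\w_k(\tau)$ can be reconstructed by $d_k$ from the protocol (either because they are appended to the header or because the PE state evolves deterministically from the COF, which $d_k$ partially shares).

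For the ``if'' direction, assume $\Omega_{M,k}\subseteq \Omega_{Z,k}(t)$. Then every basis vector $\delta_{k,j}$ of $\Omega_{M,k}$ lies in $\linsp(\w_k(\tau):\tau\in[t])$, so there exist scalars $\{\alpha_{k,j,\tau}\}_{\tau=1}^{t}\subset\GF(q)$ with $\delta_{k,j}=\sum_{\tau=1}^{t}\alpha_{k,j,\tau}\w_k(\tau)$. Right-multiplying this identity by the message column vector yields $X_{k,j}=\sum_{\tau=1}^{t}\alpha_{k,j,\tau}Z_k(\tau)$, so $d_k$ recovers $X_{k,j}$ by solving the linear system defined by the known $\w_k(\tau)$'s and substituting its own observations. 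This gives zero-error decoding for every $j\in[nR_k]$.

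For the ``only if'' direction, I would argue contrapositively: suppose $\delta_{k,j_0}\notin \Omega_{Z,k}(t)$ for some $j_0\in[nR_k]$. Because $\Omega_{Z,k}(t)$ is a proper subspace not containing $\delta_{k,j_0}$, there exists a nonzero vector $\Delta\in\GF(q)^{\sum_l nR_l}$ in the annihilator of $\linsp(\w_k(\tau):\tau\in[t])$ whose $(k,j_0)$-coordinate is nonzero. Adding any scalar multiple $c\Delta$ (with $c\in\GF(q)$) to the true message vector changes $X_{k,j_0}$ by a nonzero amount while leaving every received symbol $Z_k(\tau)$ unchanged. Since the prior over messages is uniform, the posterior of $X_{k,j_0}$ conditioned on the full observation $\{Z_k(\tau):\tau\in[t]\}$ is uniform over a coset of size $q$, so any (possibly randomized) decoder $g_k$ satisfies $\prop(\hat{X}_{k,j_0}\neq X_{k,j_0})\geq (q-1)/q$. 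This violates the achievability requirement $\prop(\hat{\X}_k\neq \X_k)<\epsilon$ for any $\epsilon<(q-1)/q$, proving the forward direction of the lemma.

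The only place requiring a moment of care is the ``only if'' direction, where one must justify that non-containment of $\delta_{k,j_0}$ leaves $X_{k,j_0}$ information-theoretically undetermined given the observations; this follows at once from the uniform prior and linearity, so no deeper obstacle arises. Hence the lemma reduces entirely to the dictionary between ``recoverable by a linear decoder'' and ``in the row span of the received coding vectors.''
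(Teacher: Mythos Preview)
Your proof is correct. The paper does not actually supply a proof of this lemma; it simply remarks that the definitions of $\Omega_{M,k}$ and $\Omega_{Z,k}(t)$ ``imply the following straightforward lemma'' and moves on. Your argument is the standard linear-algebra justification the paper omits: the ``if'' direction is the usual row-span recovery, and for ``only if'' you correctly invoke the existence of a vector in the annihilator of $\Omega_{Z,k}(t)$ with nonzero $(k,j_0)$-coordinate, together with the uniform prior, to conclude that $X_{k,j_0}$ is information-theoretically undetermined. One cosmetic point: your final sentence phrases the conclusion in terms of the achievability definition (the $\epsilon$-error criterion), whereas the lemma is a pointwise statement about decodability at a fixed time $t$; it would be cleaner to say simply that the posterior of $X_{k,j_0}$ is uniform on $\GF(q)$, hence $d_k$ cannot decode, without routing through Definition~1.
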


We now define ``non-interfering vectors" from the perspective of a destination $d_k$.
\begin{definition}
In the end of time $t$ (or in the beginning of time $(t+1)$), a vector $\vv$ (and thus the corresponding coded packet) is ``non-interfering" from the perspective of $d_k$ if
\begin{align}
\vv\in \linsp(\Omega_{Z,k}(t),\Omega_{M,k}).\nonumber
\end{align}
\end{definition}

By definition, any non-interfering vector $\vv$ can always be expressed as the sum of two vectors $\vv'$ and $\w$, where $\vv'\in \Omega_{M,k}$ is a linear combination of all information vectors for $d_k$ and $\w\in \Omega_{Z,k}(t)$ is a linear combination of all the packets received by $d_k$. If $\vv'=0$, then $\vv=\w$ is a {\em transparent} packet from $d_k$'s perspective since $d_k$ can compute the value of $\w\cdot (X_{1,1},\cdots, X_{K,nR_K})^\tran$ from its current knowledge space $\Omega_{Z,k}(t)$.
If $\vv'\neq 0$, then $\vv=\vv'+\w$ can be viewed as a pure information packet $\vv'\in \Omega_{M,k}$ after subtracting the unwanted $\w$ vector. In either case, $\vv$ is {\em not interfering} with the transmission of the $(s,d_k)$ session, which gives the name of ``non-interfering vectors."

The following Lemmas~\ref{lem:non-interfering} and \ref{lem:decodability} discuss the time dynamics of the PE scheme. To distinguish different time instants, we add a time subscript and use $S_{t-1}(X_{k,j_k})$ and $S_{t}(X_{k,j_k})$ to denote the overhearing set of $X_{k,j_k}$ in the end of time $(t-1)$ and $t$, respectively. Similarly, $\vv_{t-1}(X_{k,j_k})$ and $\vv_{t}(X_{k,j_k})$ denote the coding vectors in the end of time $(t-1)$ and $t$, respectively.

\begin{lemma}\label{lem:non-interfering}In the end of the $t$-th time slot, consider any $X_{k,j}$ out of all the information packets $X_{1,1}$ to $X_{K,nR_K}$. Its assigned vector $\vv_t(X_{k,j})$ is non-interfering from the perspective of $d_i$ for all $i\in (S_t(X_{k,j})\cup\{k\})$.
\end{lemma}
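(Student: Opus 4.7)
The plan is to induct on the time index $t$. At $t=0$ (before any transmission), each information packet $X_{k,j}$ has $\vv_0(X_{k,j})=\delta_{k,j}\in\Omega_{M,k}$ and $S_0(X_{k,j})=\emptyset$, so the only index to check is $i=k$, and $\delta_{k,j}\in\Omega_{M,k}\subseteq\linsp(\Omega_{Z,k}(0),\Omega_{M,k})$. Thus the claim holds trivially at $t=0$.

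For the inductive step, assume the property holds at time $(t-1)$ and consider time $t$. I would split into two cases. First, if $X_{k,j}$ is not a target packet in slot $t$, then by the {\sc Update} rule both $\vv_t(X_{k,j})=\vv_{t-1}(X_{k,j})$ and $S_t(X_{k,j})=S_{t-1}(X_{k,j})$ are unchanged. Since $\Omega_{Z,i}(t)\supseteq\Omega_{Z,i}(t-1)$ for every $i$, $\linsp(\Omega_{Z,i}(t),\Omega_{M,i})$ can only grow, so the inductive hypothesis directly yields the conclusion. The same reasoning covers the subcase where $X_{k,j}$ is a target packet with $S_{\text{rx}}\subseteq S_{t-1}(X_{k,j})$ (no update occurs).

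The main work is the remaining subcase: $X_{k,j}=X_{k,j_k}$ is a target packet with $k\in T$ and $S_{\text{rx}}\nsubseteq S_{t-1}(X_{k,j_k})$, so that $\vv_t(X_{k,j_k})=\vv_{\text{tx}}=\sum_{l\in T}c_l\,\vv_{t-1}(X_{l,j_l})$ and $S_t(X_{k,j_k})=(T\cap S_{t-1}(X_{k,j_k}))\cup S_{\text{rx}}$. The key observation, which is the crux of the argument, is that Line~\ref{line:targeting-set} of the packet evolution scheme forces $T\subseteq S_{t-1}(X_{l,j_l})\cup\{l\}$ for \emph{every} $l\in T$. Consequently, for any fixed $i\in T$ and any $l\in T$, we have $i\in S_{t-1}(X_{l,j_l})\cup\{l\}$, so the inductive hypothesis applied to each $X_{l,j_l}$ tells us that $\vv_{t-1}(X_{l,j_l})$ is non-interfering from $d_i$'s perspective. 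Since the collection of non-interfering vectors from $d_i$'s perspective is a linear subspace (being the sum of two subspaces), the linear combination $\vv_{\text{tx}}$ is also non-interfering from $d_i$'s perspective for every $i\in T$. In particular, this covers $i=k$ and every $i\in T\cap S_{t-1}(X_{k,j_k})$.

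It remains to handle indices $i\in S_{\text{rx}}\setminus T$. For such $i$, the destination $d_i$ actually receives the transmitted coded packet in slot $t$, so $\vv_{\text{tx}}\in\Omega_{Z,i}(t)\subseteq\linsp(\Omega_{Z,i}(t),\Omega_{M,i})$, which is the non-interfering condition. Together, these cases exhaust $S_t(X_{k,j_k})\cup\{k\}$, completing the induction. I expect the only delicate step to be reading off the selection constraint $T\subseteq S_{t-1}(X_{l,j_l})\cup\{l\}$ for all $l\in T$ from Line~\ref{line:targeting-set} and applying the inductive hypothesis simultaneously to all $l\in T$; once that is done, the linear-span closure and the reception argument finish the proof mechanically.
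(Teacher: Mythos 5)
Your proof is correct, and since the paper states Lemma~\ref{lem:non-interfering} without an accompanying proof (most proofs are omitted for space), there is no printed argument to diverge from; the induction on $t$ you give is the natural one and matches the invariant structure the scheme is built around. The case analysis is exhaustive: packets untouched by \textsc{Update} (non-targets, or targets with $S_{\text{rx}}\subseteq S_{t-1}$) are handled by monotonicity of $\linsp(\Omega_{Z,i}(t),\Omega_{M,i})$ in $t$; for an updated target, the selection constraint $(S_{t-1}(X_{l,j_l})\cup\{l\})\supseteq T$ for \emph{all} $l\in T$ lets you apply the inductive hypothesis to every summand of $\vv_{\text{tx}}$ simultaneously, and closure of the non-interfering set under linear combinations (it is the sum of two subspaces) covers all $i\in T$, hence $i=k$ and $i\in T\cap S_{t-1}(X_{k,j_k})$; reception covers $i\in S_{\text{rx}}\setminus T$ via $\w_i(t)=\vv_{\text{tx}}\in\Omega_{Z,i}(t)$. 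This exactly exhausts $S_t(X_{k,j_k})\cup\{k\}=(T\cap S_{t-1}(X_{k,j_k}))\cup S_{\text{rx}}\cup\{k\}$.

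One implicit step deserves an explicit sentence. You write $\vv_{\text{tx}}=\sum_{l\in T}c_l\,\vv_{t-1}(X_{l,j_l})$, but in the scheme the coding vector is constructed only when $\mathsf{f}_{\text{change}}=1$, at some slot $\tau\leq t$, and is then retransmitted unchanged; so a priori $\vv_{\text{tx}}=\sum_{l\in T}c_l\,\vv_{\tau-1}(X_{l,j_l})$, and the constraint of Line~\ref{line:targeting-set} was verified at slot $\tau$, not at slot $t$. The identification with time-$(t-1)$ quantities is nonetheless valid: \textsc{Update} modifies only the current target packets, and it raises the flag precisely when some target's overhearing set changes, so during slots $\tau,\ldots,t-1$ no target's $(\vv,S)$ pair changes --- they are frozen from the end of slot $\tau-1$ through the end of slot $t-1$. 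With that observation added, both the decomposition of $\vv_{\text{tx}}$ into time-$(t-1)$ vectors and the constraint $T\subseteq S_{t-1}(X_{l,j_l})\cup\{l\}$ hold as you use them, and the induction closes.
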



To illustrate Lemma~\ref{lem:non-interfering}, consider our 5-time-slot example. In the end of Slot~4, we have $\vv(X_1)=(1,1,0)$ and $S(X_1)\cup \{1\}=\{1,2,3\}$. It can be easily verified by definition that $\vv(X_1)=(1,1,0)$ is non-interfering from the perspectives of $d_1$, $d_2$, and $d_3$, respectively.


\begin{lemma}\label{lem:decodability} In the end of the $t$-th time slot, we use $\Omega_{R,k}(t)$ to denote the {\em remaining space} of the PE scheme:
\begin{align}
&\Omega_{R,k}(t)\stackrel{\Delta}{=}\nonumber\\
&\linsp(\vv_t(X_{k,j}):\forall j\in [nR_k] \text{ satisfying } k\notin S_t(X_{k,j})).\nonumber
\end{align}

For any $n$ and any $\epsilon>0$, there exists a sufficiently large finite field $\GF(q)$ such that for all $k\in[K]$ and $t\in[n]$,
\begin{align}
&\prop\left(\linsp(\Omega_{Z,k}(t), \Omega_{R,k}(t))=\linsp(\Omega_{Z,k}(t), \Omega_{M,k})\right)\nonumber\\
&>1-\epsilon.\nonumber
\end{align}
\end{lemma}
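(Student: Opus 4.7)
The plan is to prove $\Omega_{M,k} \subseteq \linsp(\Omega_{Z,k}(t), \Omega_{R,k}(t))$ for every $k$ by induction on $t$, since the reverse inclusion $\linsp(\Omega_{Z,k}(t),\Omega_{R,k}(t))\subseteq\linsp(\Omega_{Z,k}(t),\Omega_{M,k})$ is immediate from \Lemma{\ref{lem:non-interfering}} applied to each generator of $\Omega_{R,k}(t)$ (automatically non-interfering from $d_k$ since $k\in\{k\}\subseteq S_t(X_{k,j})\cup\{k\}$). The base case $t=0$ is immediate because $S_0(X_{k,j})=\emptyset$ and $\vv_0(X_{k,j})=\delta_{k,j}$, so $\Omega_{R,k}(0)=\Omega_{M,k}$.

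For the inductive step, I would fix $k$ and examine the effect of the $t$-th transmission. If $k\notin T$, or if $k\in T$ but the update condition fails, then session $k$'s coding vectors and overhearing sets are untouched while $\Omega_{Z,k}$ can only grow, so the induction hypothesis carries through. The delicate case is $k\in T$ with an actual update, in which $\vv_{t-1}(X_{k,j_k})$ is replaced by $\vv_\text{tx}=c_k\vv_{t-1}(X_{k,j_k})+\sum_{k''\in T\setminus\{k\}}c_{k''}\vv_{t-1}(X_{k'',j_{k''}})$; moreover, $\vv_{t-1}(X_{k,j_k})$ is a generator of $\Omega_{R,k}(t-1)$ only when $k\notin S_{t-1}(X_{k,j_k})$, so only that subcase requires work. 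The task then reduces to showing $\vv_{t-1}(X_{k,j_k})\in\linsp(\Omega_{Z,k}(t),\Omega_{R,k}(t))$.

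The decisive leverage will come from the packet-selection constraint of Line~\ref{line:targeting-set}: for every $k''\in T\setminus\{k\}$, the requirement $(S_{t-1}(X_{k'',j_{k''}})\cup\{k''\})\supseteq T\ni k$ together with $k''\neq k$ forces $k\in S_{t-1}(X_{k'',j_{k''}})$, so \Lemma{\ref{lem:non-interfering}} makes $\vv_{t-1}(X_{k'',j_{k''}})$ non-interfering from $d_k$'s perspective, which by the inductive hypothesis places it inside $\linsp(\Omega_{Z,k}(t-1),\Omega_{R,k}(t-1))$. Substituting the resulting expansions into $\vv_\text{tx}$ and collecting the $\vv_{t-1}(X_{k,j_k})$-term separately (using $\vv_t(X_{k,j})=\vv_{t-1}(X_{k,j})$ for non-targeted $j\neq j_k$, and the fact that for such $j$ the overhearing set is also unchanged, hence these vectors remain in $\Omega_{R,k}(t)$), I expect to arrive at an identity $\vv_\text{tx}=\alpha\,\vv_{t-1}(X_{k,j_k})+\w^\ast$ with $\w^\ast\in\linsp(\Omega_{Z,k}(t),\Omega_{R,k}(t))$ and $\alpha=c_k+\sum_{k''\in T\setminus\{k\}}c_{k''}\gamma_{k'',j_k}$ an affine function of the freshly drawn $c_k$ with unit leading coefficient. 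Whenever $\alpha\neq 0$, inversion gives $\vv_{t-1}(X_{k,j_k})=\alpha^{-1}(\vv_\text{tx}-\w^\ast)$; and because $\vv_\text{tx}=\vv_t(X_{k,j_k})$ lies either in $\Omega_{Z,k}(t)$ (when $k\in S_\text{rx}$, so $d_k$ actually receives $\vv_\text{tx}$) or in $\Omega_{R,k}(t)$ (when $k\notin S_t(X_{k,j_k})$), the inductive step closes.

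The main obstacle is the exceptional event $\alpha=0$, which would break the inversion. Since $c_k\in\GF(q)$ is drawn uniformly and independently of the history on which the $\gamma_{k'',j_k}$ depend, $\alpha$ is a nontrivial degree-one polynomial in $c_k$, so Schwartz--Zippel yields $\prop(\alpha=0)\leq 1/q$. A union bound over the at most $nK$ update events across $n$ slots and $K$ sessions gives a total failure probability of at most $nK/q$, which can be made smaller than any prescribed $\epsilon$ by choosing $q$ large. The real subtlety is therefore not the algebra itself but verifying that the packet-selection rule guarantees every cross-session contributor to $\vv_\text{tx}$ is already ``benign'' from $d_k$'s perspective, so that destructive updates never erase information beyond what $\vv_\text{tx}$ together with the surviving generators can reproduce.
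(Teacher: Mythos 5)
You should first note a mismatch with the premise of this exercise: the paper never prints a proof of this lemma (it states that most proofs are ``omitted due to the limit of space''), so your argument can only be checked against the lemma statement, the algorithm itself, and Lemma~\ref{lem:non-interfering}. On those terms, your proof is correct and its structure is sound. The reverse inclusion is indeed immediate from Lemma~\ref{lem:non-interfering} with $i=k$; the base case $\Omega_{R,k}(0)=\Omega_{M,k}$ is right; and your case split is exhaustive --- in particular, when $k\in T$, the update fires, and $k\in S_{t-1}(X_{k,j_k})$, the new overhearing set $(T\cap S_{t-1}(X_{k,j_k}))\cup S_\text{rx}$ still contains $k$ (since $k\in T$), so no new generator appears and the ``untouched'' reasoning applies. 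The heart of your argument --- using the selection constraint $(S_{t-1}(X_{k'',j_{k''}})\cup\{k''\})\supseteq T$ to force $k\in S_{t-1}(X_{k'',j_{k''}})$ for every $k''\in T\setminus\{k\}$, hence non-interference from $d_k$'s perspective and, via the inductive hypothesis, membership in $\linsp(\Omega_{Z,k}(t-1),\Omega_{R,k}(t-1))$ --- is precisely the ``code alignment'' mechanism the paper describes only informally in Section~\ref{subsec:intuition-PE}, so your proof formalizes what the paper leaves as intuition. Your final dichotomy ($\vv_\text{tx}\in\Omega_{Z,k}(t)$ or $\vv_\text{tx}\in\Omega_{R,k}(t)$) is also exhaustive, because in the delicate sub-case $k\notin S_{t-1}(X_{k,j_k})$ makes $k\in S_t(X_{k,j_k})$ equivalent to $k\in S_\text{rx}$.

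Two points deserve to be made explicit before the Schwartz--Zippel step can be declared finished. First, the flag mechanism means the same $\vv_\text{tx}$ is retransmitted over several slots, so the update you analyze at slot $t$ may use coefficients $\{c_{k'}\}$ drawn at an earlier slot $t_0<t$; your independence claim between $c_k$ and the $\gamma_{k'',j_k}$ (which you take from an expansion at time $t-1$) therefore needs the observation that nothing relevant changes during $[t_0,t-1]$. This does hold: no target's overhearing set changes in that window (otherwise the flag would have triggered a fresh draw before slot $t$), and in your delicate sub-case $d_k$ cannot have received the packet in that window without firing the session-$k$ update, since $k\in S_\text{rx}$ together with $k\notin S_{t-1}(X_{k,j_k})$ forces $S_\text{rx}\nsubseteq S_{t-1}(X_{k,j_k})$; hence $\Omega_{Z,k}(t-1)=\Omega_{Z,k}(t_0-1)$ and the generators of $\Omega_{R,k}(t-1)$ coincide with those at time $t_0-1$, so the $\gamma$'s are determined by history strictly prior to the coefficient draw. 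Second, the expansion defining the $\gamma_{k'',j_k}$ is generally not unique (the generators may be dependent and the two spaces may intersect), so you should fix a deterministic expansion rule as a function of the history; conditioned on the history, $\alpha$ is then affine in $c_k$ with unit leading coefficient and $\prop(\alpha=0)=1/q$. With these two remarks added, your union bound $nK/q<\epsilon$ over all session updates is valid and the induction closes.
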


Intuitively, Lemma~\ref{lem:decodability} says that if in the end of time $t$ we directly transmit all the {\em remaining} coded packets $\left\{\vv_t(X_{k,j}):\forall j\in [nR_k], k\notin S_t(X_{k,j})\right\}$ from $s$ to $d_k$ through a noise-free information pipe, then with high probability, $d_k$ can successfully decode all the desired information packets $X_{k,1}$ to $X_{k,nR_k}$ (see Lemma~\ref{lem:simple-dec}) by the knowledge space $\Omega_{Z,k}(t)$ and the new information of the remaining space $\Omega_{R,k}(t)$.

\subsection{The Intuitions Of The Packet Evolution Scheme\label{subsec:intuition-PE}}
Lemmas~\ref{lem:non-interfering} and~\ref{lem:decodability} are the key properties of a PE scheme. In this subsection, we discuss the corresponding intuitions.

{\sf Receiving} {\bf the information packet $X_{k,j}$:}\quad
Each information packet keeps a coding vector $\vv(X_{k,j})$. Whenever we would like to communicate $X_{k,j}$ to destination $d_k$, instead of sending a non-coded packet $X_{k,j}$ directly, the PE scheme sends an intersession coded packet according to the coding vector $\vv(X_{k,j})$. Lemma~\ref{lem:decodability} shows that if we send all the coded vectors $\vv(X_{k,j})$ that have not been heard by $d_k$ (with $k\notin S(X_{k,j})$) through a noise-free information pipe, then $d_k$ can indeed decode all the desired packets $X_{k,j}$ with close-to-one probability. It also implies, although in an implicit way, that once a $\vv(X_{k,j_0})$ is heard by $d_k$ for some $j_0$ (therefore $k\in S(X_{k,j_0})$), there is no need to transmit this particular $\vv(X_{k,j_0})$ in the later time slots. Jointly, these two implications show that we can indeed use the coded packet $\vv(X_{k,j})$ as a substitute for $X_{k,j}$ without losing any information. In the broadest sense, we can say that $d_k$ {\sf receives} a packet $X_{k,j}$ if the corresponding $\vv(X_{k,j})$ successfully arrives $d_k$ in some time slot $t$.


{\bf Serving multiple destinations simultaneously by mixing non-interfering packets:}\quad The above discussion ensures that when we would like to send 
$X_{k,j_k}$ to $d_k$, we can send a coded packet $\vv(X_{k,j_k})$ as a substitute. On the other hand, by Lemma~\ref{lem:non-interfering}, such $\vv(X_{k,j_k})$ is non-interfering from $d_i$'s perspective for all $i\in (S(X_{k,j_k})\cup\{k\})$. Therefore, instead of sending a single packet $\vv(X_{k,j_k})$, it is beneficial to {\em linearly combine} the transmission of two packets $\vv(X_{k,j_k})$ and $\vv(X_{l,j_l})$ together, as long as $l\in S(X_{k,j_k})$ and $k\in S(X_{l,j_l})$. 
Since $\vv(X_{k,j_k})$ is non-interfering from $d_l$'s perspective, it is as if $d_l$ directly receives  $\vv(X_{l,j_l})$ without any interference. Similarly, since $\vv(X_{l,j_l})$ is non-interfering from $d_k$'s perspective, it is as if $d_k$ directly receives  $\vv(X_{k,j_k})$ without any interference. By generalizing this idea, the PE scheme first selects a $T\subseteq [K]$ and then constructs a $\vv_\text{tx}$ that can serve all destinations $k\in T$ simultaneously by mixing the corresponding non-interfering vectors.

{\bf Creating new coding opportunities while exploiting the existing coding opportunities:}\quad As discussed in the example of Section~\ref{subsec:example}, the suboptimality of the existing 2-phase approach for $K\geq 3$ destinations is due to the fact that it fails to create new coding opportunities while exploiting old coding opportunities. The PE scheme was designed to solve this problem.
Let us assume that the {\sc Packet Selection} in Line~\ref{line:targeting-set} chooses the $X_{k,j}$ such that $S(X_{k,j})=T\backslash k$. That is, we choose the $X_{k,j}$ that can be mixed with those $(s,d_l)$ sessions with $l\in S(X_{k,j})\cup\{k\}=T$. Then Line~\ref{line:S-update} of the {\sc Update} guarantees that if some other $d_i$, $i\notin T$, overhears the coded transmission, we can update $S(X_{k,j})$ with a strictly larger set $S(X_{k,j})\cup S_\text{rx}$. Therefore, new coding opportunity is created since we can now mix more sessions (all $d_i$, $i\in S(X_{k,j})$) together with $X_{k,j}$. Note that the coding vector $\vv(X_{k,j})$ is also updated accordingly. The new $\vv(X_{k,j})$ represents the necessary ``code alignment" in order to utilize this newly created coding opportunity. The (near-) optimality of the PE scheme is rooted deeply in the concept of code alignment, which aligns the ``non-interfering subspaces" through the joint use of $S(X_{k,j})$ and $\vv(X_{k,j})$.

\subsection{Analysis Of The PE Scheme}

One advantage of a PE scheme is that although different packets $X_{k,j_k}$  and $X_{i,j_i}$ with $k\neq i$ may be mixed together, the corresponding evolution of $X_{k,j_k}$ (the changes of $S(X_{k,j_k})$ and $\vv(X_{k,j_k})$)  are independent from the evolution of $X_{i,j_i}$ (see Line~\ref{line:update-4-K} of the {\sc Update}). Also by Lemma~\ref{lem:non-interfering}, two different packets $X_{k,j_k}$  and $X_{i,j_i}$ can share the same time slot without interfering each other as long as $i\in S(X_{k,j_k})$ and $k\in S(X_{i,j_i})$. As a result, the throughput analysis can be done by focusing on the individual sessions separately, and considering how many time slots from different sessions can be combined together. The achievability results are proven by analyzing the throughput of the PE scheme with carefully designed mechanisms of choosing the set $T$ and the corresponding target packets $\{X_{k,j_k}:k\in T\}$ of a generic PE scheme.

\section{Numerical Evaluation\label{subsec:sim}}

We first notice that both the inner and outer bounds are linear programming (LP) problems and can be evaluated by any LP solvers. We perform numerical evaluation for spatially independent 1-to-$K$ broadcast PECs with COF by randomly varying the values of the marginal success probabilities $(p_1,\cdots, p_K)$.
Note that although there is no tightness guarantee for $K\geq 4$ except in the one-sidedly fair rate region, in all our numerical experiments with $K\leq 6$
(totally $3\times 10^4$ of them), we have not found any instance of the input parameters $(p_1,\cdots, p_K)$, for which the gap between the outer and inner bounds is greater than the numerical precision of the LP solver. This shows that Propositions~\ref{prop:outer} and~\ref{prop:ach2} effectively describe the capacity region from the practical perspective.

\begin{figure}
\centering
\subfigure[\label{fig:sym124}]{\includegraphics[width=4.25cm]{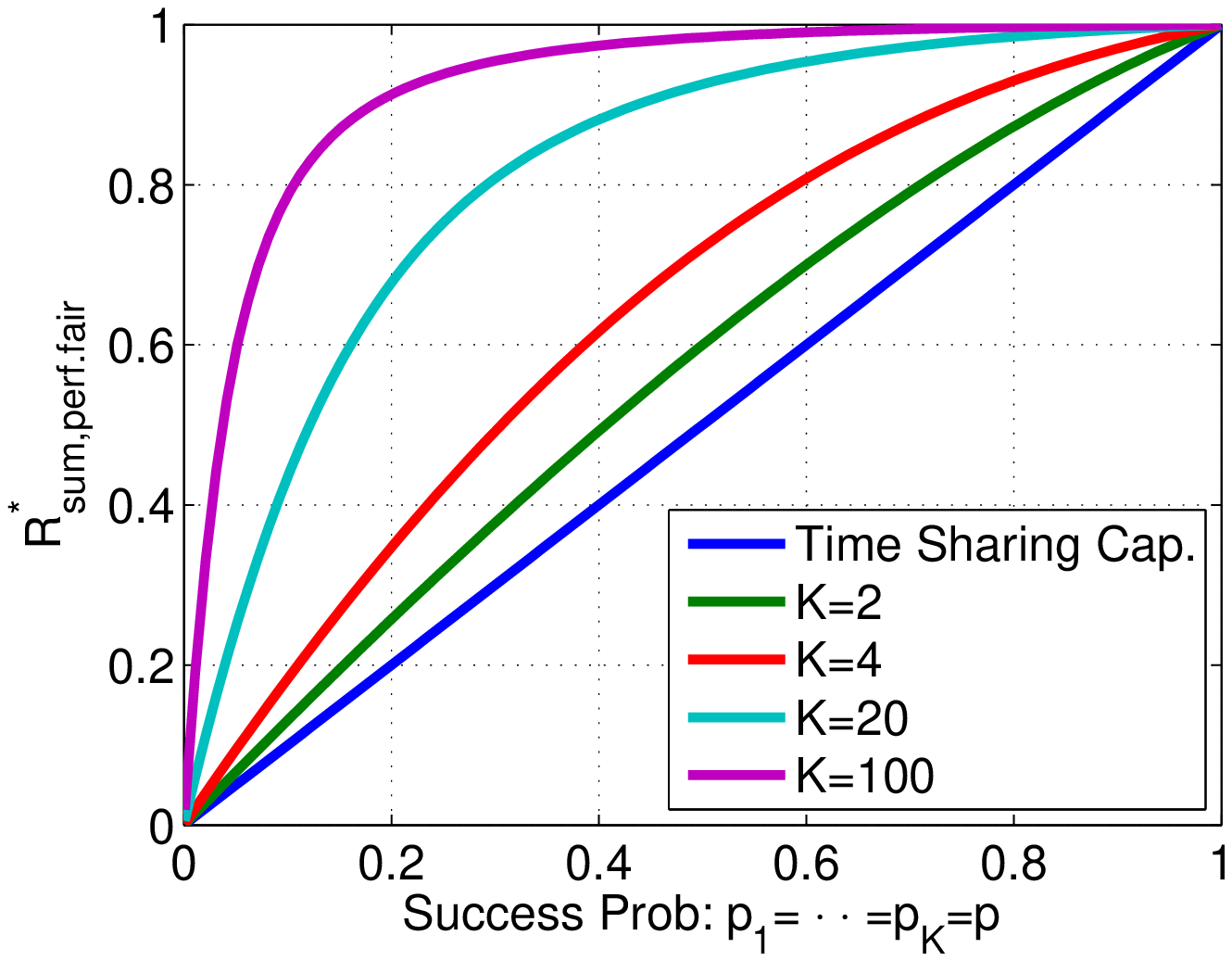}}
\subfigure[\label{fig:asym6}]{\includegraphics[width=4.25cm]{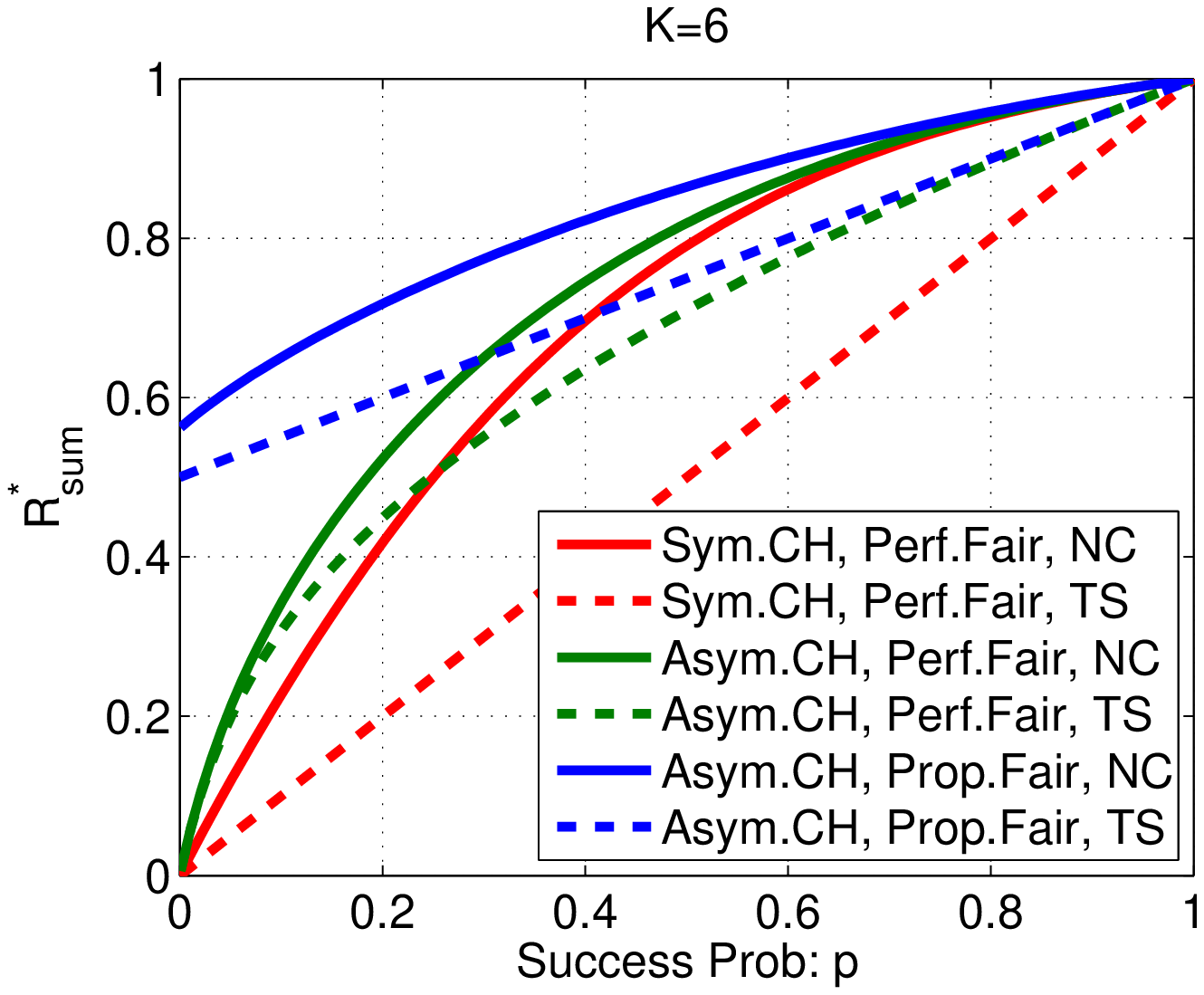}}
\caption{(a) The sum-rate capacity $R^*_{\text{sum,perf.fair}}$ in a perfectly fair system versus the marginal success probability $p$ of a symmetric, spatially independent 1-to-$K$ broadcast PEC, $K=2$, 4, 20, and 100. (b) The sum-rate capacities for a 6-destination heterogenous channel profiles with the success probabilities $p_1$ to $p_6$ evenly spaced between $(p,1)$. }
\end{figure}

To illustrate the network coding gain, we compare the sum-rate capacity versus the sum rate achievable by time sharing. Fig.~\ref{fig:sym124} considers symmetric, spatially independent PECs with marginal success probabilities $p_1=\cdots=p_K=p$. We plot the sum rate capacity $R_{\text{sum,perf.fair}}^*$ versus $p$ for a perfectly fair system $R_1=\cdots=R_K$. 
As seen in Fig.~\ref{fig:sym124}, the network coding gains are substantial when we  have $K\geq 4$ destinations.  It can also be proven that for any $p\in(0,1]$,  $R_{\text{sum,perf.fair}}^*$ approaches one as $K\rightarrow\infty$, which was first observed in \cite{LarssonJohansson06}.

We are also interested in the sum rate capacity under asymmetric channel profiles (also known as heterogeneous channel profiles). Consider asymmetric, spatially independent PECs. For each $p$ value, we let the channel parameters $p_1$ to $p_K$ be equally spaced between $(p,1)$, i.e., $p_k=p+(k-1)\frac{1-p}{K-1}$. We then plot the sum rate capacities for different $p$ values. (In this experiment, the outer and inner bounds in Section~\ref{sec:main} meet for all different $p$ values.) Fig.~\ref{fig:asym6} describes the case for $K=6$. We plot the curves for perfectly fair ($R_1=\cdots =R_K$) and proportionally fair ($R_k\propto p_k$) systems, respectively. For comparison, we also plot the time-sharing capacity under the heterogeneous channel profile. For comparison between symmetric (homogeneous) and asymmetric (heterogeneous) channel profiles, we plot the sum-rate capacity for symmetric channels as well. As can be seen in Fig.~\ref{fig:asym6}, network coding again provides substantial improvement for all $p$ values. However, the gain is not as large as in the case of symmetric channels. The results show that for practical implementation, it is better to group together all the sessions of similar marginal success rates and perform intersession network coding within the same group.

\section{Conclusion\label{sec:conclusion}}

In this work, we have proposed a new class of intersession network coding schemes, termed the packet evolution (PE) schemes, for the broadcast PECs with COF. Based on the PE schemes, we have
derived the capacity region for general 1-to-3 broadcast PECs, and a pair of capacity outer and inner bounds for general 1-to-$K$ broadcast PECs, both of which can be easily evaluated by any linear programming solver for the cases $K\leq 6$. It has also been proven that the outer and inner bounds meet for two classes of 1-to-$K$ broadcast PECs: the symmetric broadcast PECs, and the spatially independent broadcast PECs with the one-sided fairness rate constraints. Extensive numerical experiments have shown that the outer and inner bounds meet for almost all broadcast PECs encountered in practical scenarios.


\section*{Acknowledgment}

This work was supported in parts by NSF grants CCF-0845968 and CNS-0905331. The author would also like to thank  Profs.\ Anant Sahai and David Tse for their suggestions.

\bibliography{ntwkcoding,nc_sys,chihw}
\bibliographystyle{IEEEtranS}

\end{document}